
\documentclass{llncs}
\usepackage{setspace}
\usepackage{amssymb,amsmath}
\usepackage{url}
\usepackage{graphicx}
\usepackage{float}
\usepackage{algorithmic}
\usepackage{algorithm}
\usepackage{color}
\usepackage{epsfig}
\usepackage{amsmath}
\usepackage{amssymb}

\floatstyle{boxed}
\restylefloat{figure}

\usepackage{epsfig}
\usepackage{amsmath}
\usepackage{amssymb}



\def\squarebox#1{\hbox to #1{\hfill\vbox to #1{\vfill}}}

\def\eod{\vrule height 6pt width 5pt depth 0pt}

\newcommand{\negA}{\vspace{-0.05in}}
\newcommand{\negB}{\vspace{-0.1in}}

\newcommand{\mysubsection}[1]{\negB\subsection{#1}\negA}

\newcommand{\myparagraph}[1]{\par\smallskip\par\noindent{\bf{}#1:~}}

\newcommand{\be}{\begin{equation}}
\newcommand{\ee}{\end{equation}}

\def \1{1 \!\! 1}

\newcommand{\ca}{{\cal A}}

\newcommand{\eps}{\varepsilon}

\newtheorem{myclaim}[section]{Claim}

\newcommand{\remove}[1]{}
\newcommand{\comment}[1] {}

\newtheorem{mylemma}[theorem]{Lemma}
\newtheorem{cl}[lemma]{Claim}




\newcommand{\cm}[1]{}

\renewcommand{\baselinestretch}{1.125}

\newcommand{\cI}{{\cal I}}









\newcommand{\cJ}{{\cal J}}
\newcommand{\cT}{{\cal T}}
\newcommand{\fsap}{\sc fsap}
\newcommand{\sap}{\sc sap}
\newcommand{\fbap}{\sc fbap}
\newcommand{\fba}{\sc fba}
\newcommand{\ba}{\sc ba}
\newcommand{\dsa}{\sc dsa}
\newcommand{\fsapss}{{\textsc {fsap}}}
\newcommand{\fbass}{{\textsc {fba}}}
\newcommand{\acn}{{\cal A}_{Narrow\_Color}}

\newcommand{\etal}{{\em et al.}}
\newcommand{\AVAIL}{{\tt AVAIL}}
\newcommand{\SOL}{{\tt SOL}}
\newcommand{\MinW}{{\tt Min}}
\newcommand{\MaxW}{{\tt Max}}


\def \ee   {\varepsilon}

%
\setstretch{1.00}


\newlength{\tablength}
\newlength{\spacelength}
\settowidth{\tablength}{\mbox{\ \ \ \ \ \ \ \ }}
\settowidth{\spacelength}{\mbox{\ }}

\newcommand{\tabstar}{\hspace*{\tablength}}
\newcommand{\spacestar}{\hspace*{\spacelength}}
\def\obeytabs{\catcode`\^^I=\active}
{\obeytabs\global\let^^I=\tabstar}
{\obeyspaces\global\let =\spacestar}
\newenvironment{display}{\begingroup\obeylines\obeyspaces\obeytabs}{\endgroup}
\newenvironment{prog}{\begin{display}\parskip0pt\sf}{\end{display}}


\pagestyle{plain}
\pagenumbering{arabic}

\title{
Flexible Resource Allocation for Clouds and 
All-Optical Networks
}
\author{Dmitriy Katz\inst{1} \and  Baruch Schieber\inst{1} \and
Hadas Shachnai\inst{2}\thanks{
This work was partly carried out
during a visit to DIMACS supported by the National Science Foundation under grant number CCF-1144502.}
}

\institute{IBM T.J. Watson Research Center, Yorktown Heights, NY~10598. \mbox{E-mail: {\tt \{katzrog,sbar\}@us.ibm.com}}
\and  Computer Science Department,
Technion, Haifa 3200003, Israel. \mbox{E-mail: {\tt hadas@cs.technion.ac.il}} }

\date{}
\begin{document}

\maketitle
\begin{abstract}
Motivated by the cloud computing paradigm, and by
key optimization problems
in all-optical networks,
we study two variants of the classic job interval scheduling problem, where 
a reusable resource is allocated to competing job intervals
in a {\em flexible} manner.
Each job, $J_i$, requires the use of up to $r_{max}(i)$ units of the resource, with a profit of $p_i \geq 1$ accrued for
each allocated unit. The goal is to feasibly schedule a subset of the jobs so as to maximize the total profit.
The resource can be allocated either in {\em contiguous} or {\em
non-contiguous} blocks.  These problems can be viewed
as flexible variants of the well known {\em storage allocation} and 
{\em bandwidth allocation} problems.

We show that the contiguous version is strongly NP-hard,
already for instances where all jobs have the same profit and the
same maximum resource requirement. For such instances, we derive the best possible positive result, 
namely, a polynomial time approximation scheme.
We further show that the contiguous variant admits a
$(\frac{5}{4}+\eps)$-approximation algorithm, for any fixed $\eps >0$, on instances whose job intervals form
a {\em proper} interval graph. At the heart of the algorithm lies a non-standard parameterization
of the approximation ratio itself, which is of independent interest.

For the non-contiguous case, we uncover an interesting relation  
to the {\em paging} problem that leads to
a simple $O(n \log n)$ algorithm for {\em uniform} profit instances of
$n$ jobs. The algorithm is 
easy to implement and is thus practical. 
\end{abstract}

%


\section{Introduction}
\label{sec:intro}
\subsection{Background and Motivation}
Interval scheduling is
one of the basic problems in the study of algorithms,
with
a wide range of applications
in computer science and in operations research (see, e.g., \cite{KLPS07}).
We focus on scheduling intervals with resource requirements.
In this model, we have a set of intervals (or, {\em activities}) competing for a reusable resource.\footnote{Throughout
the paper, we use the terms `intervals' and `activities'  interchangeably.} Each
activity utilizes a certain amount of the resource for the duration of its execution
and frees it upon completion. The problem is to find a feasible schedule of the activities
that satisfies certain constraints, including the requirement that
the total amount of resource allocated
simultaneously to activities never exceeds the amount of resource
available.

In this classic model,  two well-studied variants are
the storage allocation (see, e.g., \cite{LMV00,BBR13}), and the bandwidth allocation
 problems (see, e.g., \cite{BB+00,CMS07}).
In the {\em storage allocation problem} ({\sap}), each activity requires the allocation of a {\em contiguous}
block of the resource for the duration of its execution. Thus, the input is often viewed as a set of axis-parallel
rectangles; and the goal is to pack a maximum profit subset of rectangles into a horizontal strip of a given height,
by sliding the rectangles vertically but not horizontally.
When the resource can be allocated in {\em non-contiguous} blocks,
we have the {\em bandwidth allocation} {(\ba)} problem, where we only need to allocate
to each activity the required amount of the resource.

Scheduling problems of this ilk arise naturally in scenarios where activities require (non-contiguous or contiguous) 
portion of an available resource, with some revenue associated with the allocated amount.
In the cloud computing paradigm, the resource can be servers in a large computing cluster, storage capacity,
or bandwidth
allocated to time-critical jobs (see, e.g. \cite{JMNY11,MH11}).

In {\em flexgrid} all-optical networks, 
the resource is the available spectrum of light that
is divided into frequency intervals
of variable width, with a gap of unused frequencies between them (see, e.g., \cite{JT+09,G10}).
Several  high-speed signals connecting
different source-destination pairs may share a link, provided they are assigned disjoint sub-spectra
(see, e.g., in \cite{RSS09}).
Given a path network and a set of connection requests, represented by intervals, each associated with
a profit per allocated spectrum unit and a maximum bandwidth
requirement, we need to feasibly allocate 
frequencies to the requests, with the goal of maximizing the total
profit. In the {\fbap} variant
the sub-spectra allocated to each request need not be contiguous,
while in the {\fsap} variant each request requires a contiguous
frequency spectrum.
\remove{
Modern technologies used in scheduling jobs that require cloud services (see, e.g., \cite{JMNY11,MH11}),
or in spectrum assignment in
optical networks \cite{G10,VKRC12},
allow
{\em flexible} allocation of the available resources.
These technologies are attracting much
interest, due to their higher efficiency and better utilization
of compute and network resources. However, allocating the resources
becomes even more challenging, compared to
previous rigid technologies.
This motivates our study of the flexible variants of the above problems.
}

\subsection{Problem Statement}
We consider a variant of {\sap} where each interval
can be allocated any amount of the resource up to its maximum requirement,
with a profit accrued from each resource unit allocated to it.
 The goal is to schedule the intervals contiguously, subject to resource availability,
so as to maximize the total profit. We refer to this variant below as
the {\em flexible storage allocation problem} ({\fsap}).

 We also consider the
{\em flexible bandwidth allocation} ({\fbap})
problem, where each interval specifies an upper 
bound on the amount of the resource it can be allocated,
as well as the profit accrued from each allocated unit of the resource.
The goal is to determine the amount of resource allocated to each interval so as to maximize the total profit.

Formally,
in our general framework, the input consists of a set ${\cal J}$ of $n$ intervals. Each interval
$J_i \in {\cal J}$ requires the
utilization of a given, limited, resource. The amount of resource available, denoted by $W>0$, is fixed
over time. Each interval $J_i$ is defined by the following parameters.
\begin{enumerate}
\item[(1)]
A left endpoint, $s_i \geq 0$, and a right endpoint, $e_i \geq 0$. In
this case $J_i$ is associated with the half-open interval $[s_i, e_i)$
on the real line.
\item[(2)]
The amount of resource allocated each interval, $J_i$, which can take any
value up to the {\em maximum possible value} for $J_i$, given by
$r_{max}(i)$.
\item[(3)]
The profit $p_i \geq 1$ gained for each unit of the resource allocated to $J_i$.
\end{enumerate}

A {\em feasible} solution has to satisfy the following conditions. $(i)$ Each interval $J_i \in {\cJ}$ is allotted an amount of the
resource in its given range. $(ii)$ The total amount of the resource allocated at any time does not exceed the available amount,
$W$. In {\fbap}, we seek a feasible allocation which maximizes
 the total profit accrued by the intervals.
In {\fsap}, we add the requirement that the allocation to each
 interval is a {\em contiguous} block of the resource.

Given an algorithm ${\ca}$, let ${\ca}({\cJ}),OPT({\cJ})$ denote the
 profit of ${\ca}$ and an optimal solution for a problem instance ${\cJ}$, respectively.
 For $\rho \geq 1$, we say that
${\ca}$ is a $\rho$-approximation algorithm if, for any instance ${\cJ}$, $\frac{OPT({\cJ})}{{\ca}({\cJ})} \leq \rho$.

\comment{
\myparagraph{Applications}
Both the {\sap} and the {\ba} problem
naturally arise in scenarios where activities require (contiguous or non-contiguous) static portions of a resource. A
computer program
may require a contiguous range of storage space (e.g., memory allocation) for a specific time interval.
A task may require certain amount of (non-contiguous) bandwidth, or a contiguous set of frequencies. The resource
may be a banner, where each task is an advertisement that requires a contiguous portion of the banner.
Our study of {\fsap} and {\fbap} is primarily motivated from critical resource allocation problems arising in all-optical
networks. We give the detailed description in the Appendix.
}
\subsection{Our Contribution}
We derive both positive and negative results. 
On the positive side, we uncover an
interesting relation of {\fbap} to the classic {\em paging} problem that leads to
a simple $O(n \log n)$ algorithm for {\em uniform} profit instances
(see Section \ref{sec:non_con}).
Thus, we substantially improve the running time of the best known
algorithm for {\fbap} (due to \cite{SWZ13}) which uses flow techniques.
Our algorithm is easy to implement and is thus practical.

On the negative side, we show (in Section \ref{sec:fsap_ptas}) that {\fsap} is strongly NP-hard,
already for instances where all jobs have the same profit and the
same maximum resource requirement. For such instances, we derive the best possible positive result, namely, a PTAS.
We also present (in the Appendix) a $\frac{2k}{2k-1}$-approximation algorithm,
where $k = \lceil \frac{W}{\MaxW} \rceil$,
which is
of practical interest.
We further show (in Section \ref{sec:proper_fsap}) that {\fsap} admits a
$(\frac{5}{4}+\eps)$-approximation algorithm, for any fixed $\eps >0$, on instances whose job intervals form
a {\em proper} interval graph.
\comment{
\mysubsection{Our Contribution}
We develop (almost) optimal algorithms for
several classes of instances of {\fsap} and {\fba}.
We first show (in Section \ref{sec:non_con}) that if all of the
intervals have the same
(unit) profit per allocated resource unit,
{\fba} can be optimally solved in
$O(n \log n)$  time, even if each interval comes also with a {\em minimum} possible resource requirement.
Thus, we substantially improve the running time of the best known
algorithm for {\fba} (due to \cite{SWZ13}) which is based on
 flow techniques.
Our algorithm is
easy to implement and is thus practical.

In Section \ref{sec:fsap_ptas}, we show that {\fsap} is strongly NP-hard already
for highly restricted instances,
where all of the intervals have the same maximum resource requirement, $\MaxW$,
and the same (unit) profit, and obtain
for this subclass
the best possible result,
namely, a PTAS.
We also give (in the Appendix) a $\frac{2k}{2k-1}$-approximation algorithm,
where $k = \lceil \frac{W}{\MaxW} \rceil$. This algorithm, which
uses as a subroutine our optimal algorithm for {\fbap}, is  of practical interest.
For general {\fsap} instances in which no job interval is properly contained in another (i.e., the input graph is {\em proper}), we
present (in Section \ref{sec:proper_fsap})
a $(\frac{5}{4} +\eps)$-approximation algorithm, for any $\eps >0$.
This improves the bound of $\frac{4}{3}$ which follows from a result of \cite{SWZ13}.
}

\myparagraph{Techniques}
Our Algorithm, Paging\_{\fba}, for the non-contiguous version of the problem, uses an interesting relation to the offline {\em paging} problem.
The key idea is to view the available resource as slots in fast
memory, and each job (interval) $J_i$ as
$r_{max}(i)$ pairs of requests for pages in the main memory.
Each pair of requests is associated with a distinct page: one request at $s_i$ and one at $e_i-1$.
We apply Belady's offline
paging algorithm that $-$ in case of a page fault $-$ evicts the page that is
requested furthest in the future (see Section \ref{sec:non_con}). If a page remains in the fast memory
between the two times it was requested, then the resource
that corresponds to its fast memory slot is allocated to the
corresponding job. In fact, Paging\_{\fba} solves the flexible bandwidth allocation problem optimally for more general instances, where each interval $J_i$ has also
a lower bound, $0 < r_{min}(i)$ on the amount of resource $J_i$ is allocated.\footnote{In obtaining all other results, we assume that $r_{min}(i)=0$ for $1 \leq i \leq n$.}

At the heart of our $(\frac{5}{4}+ \eps)$-approximation algorithm for proper instances
lies a non-standard parameterization
of the approximation ratio itself. Specifically, the algorithm uses a parameter $\beta \in \{0, 1 \}$ to guess the fraction of
total profit obtained by {\em wide} intervals, i.e., intervals with
high maximum resource requirement, in some optimal
solution. If the profit from these intervals is at least this fraction $\beta$
of the optimum for the given instance, such a high profit subset of wide intervals is found by the algorithm; else, the algorithm proceeds to find a high profit
subset of {\em narrow} and {\em wide} intervals, by solving an LP relaxation of a modified problem instance.
In solving this instance, we require that
the profit from {\em extra} units of the resource assigned to wide intervals (i.e., above certain threshold value)
is bounded by a $\beta$ fraction of the optimum (see Section \ref{sec:proper_fsap}). This
tighter constraint guarantees a small loss in profit when rounding the (fractional) solution for the LP.
The approximation ratio of $(\frac{5}{4}+\eps)$ is attained by optimizing on the value of $\beta$.
We believe this novel technique can lead to better approximation algorithms for other problems as well.

\comment{
In developing an approximation scheme for uniform {\fsap} instances, we utilizes some nice structural properties of optimal
solutions. In particular, given a uniform {\fsap} instance, we first show that
there exists an optimal solution for the (relaxed) {\fbap} instance, which satisfies the following:
any interval that properly contains another interval
is allocated a positive amount of the resource only if all the
intervals contained in it are
allocated their maximum requirements.
This property is at the heart of our algorithm, Paging\_{\fba} (see Section \ref{sec:non_con}).
For uniform instances, it implies the existence of an optimal solution for {\fbap} where each interval is allocated either
$\MaxW$ or $W \bmod\MaxW$ resource units. This solution can then be converted with
small decrease in profit into a valid solution for {\fsap}, when
$\MaxW$ is small relative to $W$.
Secondly, for uniform {\fsap} instances in which $\MaxW$ is large
relative to $W$,
we show the existence of almost optimal solutions having a certain {\em strip} structure.
For such instances, our scheme finds a solution having this structure.
}
\subsection{Related Work}
\label{sec:related}
The classic interval scheduling problem, where each interval requires all of the resource for its execution,
is solvable in $O(n \log n)$ time \cite{BB+00}.
\comment{
Storage allocation problems have been studied since the 1960's (see, e.g., \cite{K73}). The
traditional goal is to contiguously store a set of objects in minimum size memory.
This is the well-known {\em dynamic storage allocation} {(\dsa)} problem (see, e.g., \cite{BK+04}
and the references therein).
}
The {\em storage allocation problem} {(\sap)}
is NP-hard, since it includes Knapsack as a special case.
{\sap} was first studied in \cite{BB+00,LMV00}. Bar-Noy~\etal~\cite{BB+00}
presented an approximation algorithm that yields a ratio of $7$.
Chen~\etal ~\cite{CHT02} presented a polynomial time exact algorithm for the special case
where all resource requirements are multiples of $W/K$, for
some fixed  integer $K \geq 1$.
Bar-Yehuda~\etal~\cite{BBCR09} presented a randomized algorithm for {\sap} with ratio $2+\eps$, and a deterministic algorithm with ratio
$\frac{2e-1}{e-1}+ \eps < 2.582$.  The best known result is a deterministic $(2+\eps)$-approximation algorithm
due to \cite{SVZ14b}.

The {\em bandwidth allocation} ({\ba}) problem is known to be strongly NP-hard, already for uniform profits \cite{CWMX12}.
The results of Albers~\etal~\cite{AAK99} imply a constant factor approximation (where the constant is about $22$).
The ratio was improved to $3$ by Bar-Noy~\etal ~\cite{BB+00}.
Calinescu~\etal~ \cite{CCKR11} developed a randomized
approximation algorithm for {\ba} with expected performance ratio of $2 + \eps$, for every $\eps > 0$.
The best known result is an LP-based deterministic $(2 + \eps)$-approximation algorithm for {\ba}
due to Chekuri~\etal~ \cite{CMS07}.

Both {\ba} and {\sap} have been widely studied also in the {\em non-uniform} resource case, where the amount of available resource may change over time. In this setting,
{\ba} can be viewed as the {\em unsplittable flow problem (UFP)} on a path. The best known result is a $(2+\eps)$-approximation algorithm due to Anagnostopoulos~\etal ~\cite{AG+14}.
Batra~\etal \cite{BG+15} obtained approximation schemes for some spacial cases.\footnote{See also the recent results on {\em UFP with Bag constraints (BagUFP)} \cite{CC+14}.}
For {\sap} with non-uniform resource, the best known ratio is $2+\eps$, obtained by a randomized algorithm of M{\"{o}}mke and Wiese \cite{MW15}.

The {\em flexible} variants of {\sap} and {\ba}
were introduced by Shalom~\etal \cite{SWZ13}. The authors study instances where
each interval $i$ has a minimum and a maximum resource requirement, satisfying
$0 \leq r_{min}(i) < r_{max}(i) \leq W$, and the goal is to find a maximum profit schedule, such that
the amount of resource allocated to each interval $i$ is in $[r_{min}(i), r_{max}(i)]$.
 The authors show that {\fbap} can be optimally solved using flow
techniques.
The paper also presents a $\frac{4}{3}$-approximation
algorithm for {\fsap}
instances in which the input graph is proper, and
$r_{min}(i) \leq \lceil  \frac{r_{max}(i)}{2}\rceil$,
for all $1 \leq i \leq n$.
The paper \cite{SVZ14a}  shows NP-hardness
of {\fsap} instances where each interval has positive lower and upper bounds
on the amount of resource it can be allocated. The problem remains
difficult even if the bounds are identical for all activities, i.e.,
$r_{min}(i)=\MinW$ and $r_{max}(i)=\MaxW$, for all $i$, where $0 <\MinW <  \MaxW \leq W$.
The authors also show that {\fsap} is NP-hard for the subclass of instances where
$r_{min}(i)=0$ and $r_{max}(i)$ is arbitrary, for all $i$, and present
a $(2+\eps)$-approximation algorithm for such instances, for any fixed $\eps >0$.
We strengthen the hardness result of \cite{SVZ14a}, by showing that {\fsap} is strongly NP-hard
even if $r_{min}(i)=0$ and $r_{max}(i)=\MaxW$, for all $i$.

Finally, the paper \cite{SVZ14b} considers variants of {\fsap} and {\fbap} where
$0 \leq r_{min}(i) < r_{max}(i) \leq W$, and the goal is to feasibly schedule a {\em subset} $S$ of the intervals
of maximum total profit (namely, the amount of resource allocated to each interval $i \in S$ is
in $[r_{min}(i),r_{max}(i)]$). The paper presents a $3$-approximation algorithm for this version of {\fbap}, and a
$(3 +\eps)$-approximation for the corresponding version of {\fsap}, for any fixed $\eps >0$.

\section{Preliminaries}
\label{sec:prel}
We represent the input ${\cJ}$ as an interval graph, $G=(V,E)$, in which the set
of vertices, $V$, represents the $n$ jobs, and there is an edge $(v_i, v_j) \in E$ if the intervals
representing the jobs $J_i$, $J_j$ intersect. For simplicity, we interchangeably use $J_i$ to denote the $i$-th
job, and the interval representing the $i$-th job on the real-line.
We say that an input ${\cJ}$ is proper, if in the  corresponding interval graph $G=(V,E)$,
no interval $J_i$ is properly contained in another interval $J_j$, for all $1 \leq i,j \leq n$.

Throughout the paper, we use {\em coloring} terminology when referring to the assignment
of bandwidth to the jobs. Specifically, the amount of available resource, $W$, can be viewed as
the amount of available distinct colors. Thus, the demand of a job $J_i$ for (contiguous) allocation from the
resource, where the allocated amount is an integer in the range
$[0,r_{max}(i)]$,
can be satisfied by {\em coloring} $J_i$ with a (contiguous) set of colors, of size in the range
$[0,r_{max}(i)]$.

Let $C=\{1, 2, \ldots , W \}$ denote the set of available colors.
Recall that in a {\em contiguous} coloring, $c$, each interval $J_i$ is assigned
a {\em block} of $|c(J_i)|$ consecutive colors in $\{ 1, \ldots , W\}$.
In a {\em circular} contiguous coloring, $c$, we have the set of colors positioned consecutively on a circle.
Each interval $J_i \in {\cJ}$ is assigned a block of $|c(J_i)|$ consecutive colors on
the circle. Formally,
$J_i$ can be  assigned {\em any} consecutive sequence of $|c(J_i)|$ indices,
$\{ \ell, (\ell \bmod W)+1, \ldots ,  [(\ell+|c(J_i)|-2) \bmod W] +1 \}$, where $1 \leq \ell \leq W$.
Given a coloring of the intervals, $c : {\cJ} \rightarrow 2^{C}$, let $| c(J_i)|$ be the
number of colors assigned to $J_i$, then the total profit accrued from $c$ is
$\sum_{i=1}^n p_i | c(J_i)|$.

Let $S \subseteq {\cJ}$ be the subset of jobs $J_i$ for which
$|c(J_i)| \geq 1$ in a
(contiguous) coloring $C$ for the input graph $G$.
We call the subgraph of $G$ induced by $S$, denoted $G_S=(S,E_S)$, the
{\em support graph} of $S$.

\section{The Flexible Bandwidth Allocation Problem}
\label{sec:non_con}
In this section we study {\fbap}, the non-contiguous version of our
problem. We consider a generalized version of
{\fbap}, where each activity $i$ has also a lower bound $r_{min}(i)$ on
the amount of resource it is allocated.

Shalom~\etal \cite{SWZ13} showed that this generalized {\fbap} can be solved optimally by using flow techniques.
We show that in the special case where all jobs have the same
(unit) profit per allocated color (i.e., resource unit), the
problem can be solved by an efficient algorithm based on Belady's well
known  algorithm for offline paging \cite{Be66}.
From now on, assume that we have a feasible instance,
that is, there are
enough colors to allocate at least $r_{min}(i)$ colors to each job
$J_i$.

To gain some intuition, assume first that $r_{min}(i)=0$ for all
$i \in [1..n]$. We view the available colors as slots in fast
memory, and each job (interval) $J_i$ as
$r_{max}(i)$ pairs of requests for pages in the main memory.
Each pair of requests is associated with a distinct page: one request
at $s_i$
and one at $e_i-1$.
We now apply Belady's offline
paging algorithm:
if a page remains in the fast memory
between the two times it was requested, then the color
that corresponds to its fast memory slot is allocated to the
corresponding job.

When $r_{min}(i) >0$, we follow the same intuition while allocating
at least $r_{min}(i)$ colors to each $J_i$, to ensure
feasibility. We show below that the optimality of the paging
algorithm implies the optimality of the solution for our {\fbap} instance.

The algorithm is implemented iteratively, by
reassigning colors as follows.
The algorithm scans the left endpoints of the intervals,
from left to right.
When the algorithm scans $s_i$,
it first assigns $r_{min}(i)$ colors to
$J_i$ to ensure feasibility.
The algorithm starts by assigning the available colors. If there are less
than $r_{min}(i)$ colors available at $s_i$,
the algorithm examines the intervals
intersecting $J_i$ at $s_i$ in decreasing order of right endpoints,
and feasibly decreases the
number of colors assigned to these intervals and reassigns them to
$J_i$, until $J_i$ is allocated $r_{min}(i)$ colors.
The feasibility of the instance implies that so many colors can be
reassigned.

Next, the algorithm
allocates up to $r_{max}(i)-r_{min}(i)$
additional colors to interval $J_i$, to maximize profit.
If $r_{max}(i)-r_{min}(i)$ colors are available at $s_i$, then they
are assigned to $J_i$.
If so many colors are unavailable, and thus
$J_i$ is assigned less than $r_{max}(i)$ colors, then the algorithm
follows Belady's algorithm to potentially assign additional colors to
$J_i$. The algorithm examines the intervals
intersecting $J_i$ at $s_i$, and in case there are such intervals with
larger right endpoint than $e_i$, it feasibly decreases the
number of colors assigned to such intervals with the largest right
endpoints (furthest in the future),
and increases the number of colors assigned to $J_i$, up to $r_{max}(i)$.

When the algorithm scans $e_i$, the right endpoint of an interval
$J_i$, the colors assigned to $J_i$ are released and become available.
The pseudocode for Paging\_{\fba} is given in the Appendix (see
Algorithm \ref{alg:opt_fba}).


\begin{theorem}
\label{thm:opt_fba}
{\rm Paging}\_{\fba} is an optimal $O(n \log n)$ time algorithm,
 for any {\fbap} instance where $p_i=1$ for all $1 \leq i \leq n$.
\end{theorem}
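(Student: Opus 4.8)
The plan is to obtain optimality from the reduction to offline paging already described, and the running time from a batched implementation of Belady's rule. First I would make the paging instance $\mathcal{P}$ precise: a cache of $k=W$ slots, where each job $J_i$ contributes $r_{max}(i)$ distinct pages, each requested twice, at times $s_i$ and $e_i-1$, with simultaneous events ordered so that the half-open semantics of $[s_i,e_i)$ is respected (an acquisition at a point is treated as conflicting with any interval still active at that point, so that overlapping intervals never share a slot by a spurious same-instant hand-off). The conceptual core is the correspondence: a \emph{demand-paging} schedule $\sigma$ for $\mathcal{P}$ induces a coloring $c_\sigma$ of the {\fbap} instance in which a slot is assigned to $J_i$ exactly when the corresponding page stays resident throughout $[s_i,e_i-1]$, i.e.\ when its second request is a hit. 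Under this map the profit $\sum_i|c_\sigma(J_i)|$ equals the number of hits, because every page's first request is a fault and its second is a hit iff the page survived; and $c_\sigma$ is feasible since the number of simultaneously resident pages never exceeds the cache size $W$, which is precisely the capacity constraint.

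Given the correspondence, optimality of Paging\_{\fba} follows from Belady's offline optimality once both directions are in place. The forward direction—every demand-paging schedule yields a feasible coloring of equal profit—is immediate, so the coloring produced by Belady is feasible and its profit equals the maximum attainable number of hits. The reverse direction, that this value is at least the profit of \emph{any} feasible coloring, is the crux. I would prove it by normalizing an optimal coloring via an exchange argument: whenever the capacity is saturated at some time and a committed unit belongs to an interval ending later than an interval that has an uncommitted unit there, move the commitment to the earlier-ending interval. Each swap preserves feasibility (the longer interval contains the shorter, so occupancy is unchanged on their common span and only decreases elsewhere) and preserves the total profit, while strictly decreasing the potential $\sum_{u} e_{\mathrm{job}(u)}$ over committed units $u$; hence the process terminates. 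The resulting normalized optimum commits, at every saturated time, to the earliest-ending intervals, and is matched in \emph{value} by the furthest-in-the-future schedule, giving $\max_c \mathrm{profit}(c)\le \max_\sigma \mathrm{hits}(\sigma)$; together with the forward inequality this yields $\mathrm{profit}(c_{\mathrm{Belady}})=\mathrm{OPT}$.

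Next I would incorporate the lower bounds $r_{min}(i)$. Here the paging instance is constrained: at least $r_{min}(i)$ of $J_i$'s pages must be forced resident. Since the instance is feasible, these minima can always be secured by reassigning colors from active intervals with the largest right endpoints, exactly as the algorithm does; and because the minima are mandatory in \emph{every} feasible solution, it suffices to maximize profit on the residual capacity, which is again solved optimally by applying the furthest-in-the-future rule to the non-forced pages, and the same exchange argument (restricted to non-forced units) certifies optimality. I expect the \emph{main obstacle} to be making this reverse/exchange step fully rigorous while correctly handling ties and shared endpoints: a naive page-by-page realization of an arbitrary coloring can lose a hit at a saturated boundary, so the value equality must be argued through the normalized coloring and the faithful event ordering rather than a literal per-coloring simulation.

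Finally, for the $O(n\log n)$ bound I would avoid materializing the $\sum_i r_{max}(i)$ pages, which is only pseudo-polynomial. The key observation is that all pages of a job share the same two request times, so Belady is indifferent among them and its decisions can be executed on whole \emph{blocks} of colors. The implementation sorts the $2n$ endpoints and scans them, maintaining the active intervals in a max-heap keyed by right endpoint together with each interval's current color count. At a left endpoint it secures $r_{min}(i)$ and then transfers up to $r_{max}(i)-r_{min}(i)$ colors from the heap's maximum (furthest-in-the-future) intervals; at a right endpoint it frees that interval's colors. Each endpoint triggers $O(1)$ amortized heap operations, since an interval whose count is driven to zero is not revisited until reinserted, bounding the total number of block transfers by $O(n)$. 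With $O(\log n)$ per heap operation plus the initial sort, the total running time is $O(n\log n)$.
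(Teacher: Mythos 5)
Your reduction, the forward direction (schedule $\Rightarrow$ coloring of equal profit), and the $O(n\log n)$ implementation via block transfers and a max-heap are all sound and essentially match the paper (which also observes that the number of color reassignments is linear). The gap is exactly in the step you yourself flag as the crux: the normalization/exchange argument. Your swap rule moves a committed unit from an interval $A$ ending later to an interval $B$ ending earlier that is active at a saturated time $t$, justified by the parenthetical claim that ``the longer interval contains the shorter.'' That claim is false: $e_B < e_A$ does not imply $B \subseteq A$, because $B$ may \emph{start} before $A$. Concretely, take $W=1$ and three intervals with $r_{max}=1$: $D=[0,2)$, $A=[2,10)$, $B=[0,5)$, with $D$ and $A$ colored and $B$ uncolored (this is the optimal coloring, of profit $2$). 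At the saturated time $t=3$ your rule fires ($A$ is committed, $B$ is uncommitted, $e_B=5<10=e_A$) and moves the unit from $A$ to $B$; but then $B$ and $D$ both hold the single color on $[0,2)$, so the swap destroys feasibility. Since units in {\fbap} are all-or-nothing over an interval's full duration (there is no ``partial residency'' as in a paging schedule), the occupancy increase on $[s_B, s_A)$ cannot be waved away, and the termination potential does not rescue the argument. Moreover, even where the swap is valid, the final assertion that the normalized coloring ``is matched in value by the furthest-in-the-future schedule'' is left unproved, and that matching is where the real content of Belady-type optimality lies.

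The fix is to tie the exchange to the algorithm's left-to-right scan rather than applying it at arbitrary saturated times: when Paging\_{\fba} reassigns a unit from an active interval $A$ to the newly started $J_i$, we necessarily have $s_A < s_i$ and $e_i < e_A$, so $[s_i,e_i) \subset [s_A,e_A)$ and the containment you need genuinely holds; an induction of the form ``there is an optimal coloring agreeing with the algorithm on the first $k$ decisions'' then goes through, since moving a unit from the containing interval to the contained one never increases occupancy anywhere. Alternatively, you can do what the paper does and avoid proving Belady's optimality at all: it reduces to a \emph{multipaging} instance (with $r_{min}(i)$ handled by ``feasibility'' requests repeated at every time step, rather than by your residual-capacity decomposition, which is fine in spirit but also unargued for time-varying residual capacity) and cites the known optimality of Belady's rule for multipaging, so that minimizing avoidable faults is directly equivalent to maximizing profit.
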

\begin{proof}
Given an instance of {\fbap}, where $p_i=1$ for all $1 \leq i \leq n$,
define a respective multipaging problem instance, where the size of the
fast memory is $W$, as follows.
(The multipaging problem is a variant of the paging problem, where more
than one page is requested at the same time.)
For each job $J_i \in \cJ$ define two types of page requests:
{\em feasibility} requests and {\em profit} requests.
For each $s_i \leq t < e_i$ there are
$r_{min}(i)$ feasibility requests for the same $r_{min}(i)$ pages.
In addition there are $r_{max}(i)-r_{min}(i)$ pairs of requests for
$r_{max}(i)-r_{min}(i)$ distinct pages. The
first request of each pair is at $s_i$ and the second at $e_i-1$.
It is not difficult to see that Belady's algorithm is optimal also for
the multipaging problem \cite{Lib98}.
Consider an implementation of Belady's algorithm for this instance
where the feasibility requests are always processed before the profit
requests. In this case it is easy to see that none of the pages that
are requested in the feasibility requests will be evicted before the
last time in which they are requested.
Note that Paging\_{\fba} implements this variant of Belady's
algorithm, where $\AVAIL$ corresponds to the available memory slots.
The optimality of Belady's algorithm guarantees that the number of
page faults is minimized. The
page faults whenever a new page is requested cannot be avoided. (Note
that there are $\sum_{i=1}^n r_{max}(i)$ such faults.)
Additional page faults may occur when the second occurrence of a
page in the pairs of the profit requests needs to be accessed.
Minimizing the number of such page faults is equivalent to maximizing
the number of pairs of profit requests for which the requested page is
in memory throughout the interval $[s_i,e_i)$, that is, maximizing the
profit of allocated resources in the {\fbap} instance.
Paging\_{\fba} can be implemented in $O(n \log n)$ time, by noting that
the total number of color reassignments is linear, and
that
once we sort the intervals by left endpoints, the `active' intervals
can be stored in a priority queue by right endpoints to implement the
reassignments.
\hfill \eod
\end{proof}

\section{Approximating Flexible Storage Allocation}
\label{sec:proper_fsap}
In this section we consider the flexible storage allocation problem.
We focus below on {\fsap} instances in which the jobs form a {\em proper} interval graph, and give an approximation algorithm
that yields a ratio of $(\frac{5}{4} + \eps)$ to optimal.
Our Algorithm, Proper\_{\fsap}, uses the parameters $\eps >0$ and $\beta \in (0,1)$ (to be determined).
Initially, Proper\_{\fsap}  guesses the value of an optimal solution
${OPT}_{\fsapss}({\cJ})$.
(The guessing is done by binary search and it is straightforward to
verify by the results of Proper\_{\fsap} whether the guess is correct, or whether it is under or above the optimal value.)

Let ${\cJ}_{wide}$ denote the set of {\em wide} intervals $J_i$ for which
$r_{max}(i) \ge \eps W$. Let ${\cJ}_{narrow}$ denote the complement
set of {\em narrow} intervals.
If the profit from the {\em wide} intervals that are actually assigned at least
$\eps W$ colors is {\em large}, namely,
at least $\beta \cdot {OPT}_{\fsapss}({\cJ})$, then
such a high profit subset of intervals is found and returned by the algorithm as the solution. Otherwise,
Proper\_{\fsap} calls Algorithm ${\acn}$ that finds a solution of
high profit accrued from both {\em narrow} and {\em wide} intervals.
The pseudocode for Proper\_{\fsap} follows.

\begin{algorithm}[h]
\caption{Proper\_{\fsap}$({\cJ}, {\bar r_{max}}, {\bar p},W, \eps, \beta)$ \label{alg:proper_fsap}}
  \begin{algorithmic}[1]
   \STATE Guess the value of ${OPT}_{\fsapss}({\cJ})$, the optimal solution of {\fsap} on the input ${\cJ}$.
   \STATE Find in ${\cJ}_{wide}$ a solution $S$ of {\fsap} with maximum
   total profit among all solutions in which intervals that are assigned colors are assigned at least
   $\lceil \eps W \rceil$ colors.\label{alg:find_wide_solution}
 \IF {$P(S) < \beta OPT_{\fsapss}({\cJ})$}
      \STATE Let $S$ be the solution output by ${\acn}({\cJ}, {\bar r_{max}}, {\bar p},W, \eps, \beta {OPT}_{\fsapss}({\cJ}))$
\ENDIF
\STATE Return $S$ and the respective contiguous coloring
\end{algorithmic}
\end{algorithm}

We now describe Algorithm ${\acn}$ that finds an approximate
solution for {\fsap} in case the extra profit of the {\em wide}
intervals $-$ above the profit of their first $\lceil \eps W \rceil$ assigned colors $-$
is bounded by $\beta$ fraction of the optimal solution.

First, ${\acn}$ solves a linear program $LP_{\fbass}$
that finds a (fractional) maximum profit solution of the
{\fbap} problem on the set ${\cJ}$,
in which the number of colors used is no more than $(1-\eps)W$.
Note that, according to our guess, the value of the solution is at least
$(1-\eps) OPT_{\fsapss}({\cJ})$. This is since the value of the
optimal solution for $LP_{\fbass}$ when all $W$ colors are used is at least
$OPT_{\fsapss}({\cJ})$.
The solution needs to satisfy an upper bound on
the extra profit accrued from {\em wide}
intervals that are assigned more than $\eps W$ colors.

Next, this solution is rounded to an integral solution of the {\fbap}
instance, of value at least $(1-2\eps){OPT}_{\fsapss}({\cJ})$.
${\acn}$ proceeds by converting the resulting (non-contiguous)
coloring to a contiguous circular coloring with the same profit.
Finally, this coloring is converted to a valid (non-circular) coloring.
In this part, ${\acn}$ searches for the `best' index for `cutting' the circular coloring. This is done by examining  a polynomial number of integral points,
$\ell \in [1, (1-\eps)W]$, and calculating in each the loss in profit due to eliminating at most half of the colors for each {\em wide} interval whose (contiguous) color set includes
color $\ell$. The algorithm `cuts the circle' in the point $\ell$ which causes the smallest harm to the total profit. For each {\em wide} interval $J_i$  crossing $\ell$, we assign the largest among
its first block of colors (whose last color is $\ell$), or the second block of colors (which starts at $\ell \bmod \lfloor (1-\eps)W \rfloor +1$). For each {\em narrow} interval that included $\ell$, we assign the same number of new colors in the range
$[(1-\eps )W, W]$.
We give the pseudocode of ${\acn}$ in Algorithm \ref{alg:color_narrow}
in the Appendix.

\subsection{Analysis of Proper\_{\fsap}}
Our main result is the following.
\begin{theorem}
\label{thm:proper_fsap_ratio}
Proper\_{\fsap} is a $(\frac{5}{4}+\eps)$-approximation algorithm for any instance of {\fsap} in which
the input graph is proper.
\end{theorem}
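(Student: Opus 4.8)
The plan is to bound the approximation ratio separately in the two branches of Proper\_{\fsap} and then choose $\beta$ to balance them. Throughout, assume the guessed value equals $OPT := OPT_{\fsapss}({\cJ})$ and fix an optimal {\fsap} solution; write its profit as the sum of the profit $P_w$ accrued by wide intervals that are actually assigned at least $\eps W$ colors, and the remaining profit. First I would dispose of the easy branch: if the test fails, i.e. $P(S)\ge \beta\, OPT$, the algorithm returns the contiguous wide solution $S$, which is a feasible {\fsap} solution of profit at least $\beta\,OPT$, giving ratio at most $1/\beta$.

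The substance is the branch $P(S)<\beta\,OPT$, where Algorithm $\acn$ is invoked. Since $S$ is a maximum-profit wide solution with at least $\eps W$ colors per colored interval, and the wide intervals that receive at least $\eps W$ colors in the fixed optimum themselves form such a solution, $P(S)<\beta\,OPT$ forces $P_w<\beta\,OPT$. I would first argue LP feasibility: scaling every color assignment of the optimum by $(1-\eps)$ yields a fractional {\fbap} solution using at most $(1-\eps)W$ colors, of profit $(1-\eps)OPT$, whose \emph{extra} wide profit (profit from colors beyond the $\eps W$ threshold) is at most the scaled $P_w<\beta\,OPT$; hence it satisfies the extra-profit constraint of $LP_{\fbass}$. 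Thus $LP_{\fbass}$ has value at least $(1-\eps)OPT$, rounding produces an integral {\fbap} solution of profit at least $(1-2\eps)OPT$ still obeying the capacity and extra-profit bounds, and, using that the support is a \emph{proper} interval graph, this solution is converted to a contiguous \emph{circular} coloring on $\lfloor (1-\eps)W\rfloor$ colors with the same profit.

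The crux is the cost of cutting the circular coloring into a linear one. At a cut position $\ell$, the intervals whose (circular) color block contains $\ell$ all use color $\ell$ and are therefore pairwise time-disjoint; hence the narrow ones among them can be re-colored, one at a time, inside the $\eps W$ reserved colors $[(1-\eps)W,W]$ with no loss of profit, while each crossing wide interval is charged at most half of its colors. Averaging the incurred loss over all $m=\lfloor(1-\eps)W\rfloor$ choices of $\ell$ and invoking the extra-profit constraint, I would show that the best cut loses at most $\tfrac14\beta\,OPT$; this is exactly where the non-standard parameterization pays off, since without the constraint the same averaging only bounds the loss by a quarter of the \emph{entire} wide profit (which is what yields the weaker $\tfrac43$ bound). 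Consequently this branch returns profit at least $(1-2\eps-\tfrac{\beta}{4})OPT$, i.e. ratio at most $1/(1-2\eps-\beta/4)$.

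Finally I would optimize $\beta$. Ignoring the $\eps$ terms, the two ratios $1/\beta$ and $1/(1-\beta/4)$ are equalized at $\tfrac54\beta=1$, i.e. $\beta=\tfrac45$, where both equal $\tfrac54$; carrying the $\eps$ terms through and rescaling $\eps$ gives the claimed $(\tfrac54+\eps)$ bound. The main obstacle is the loss estimate for the circle-cutting step: one must show that the best cut destroys only a $\tfrac14$-fraction of the \emph{at-risk} (extra) wide profit rather than of all wide profit, which simultaneously requires the time-disjointness of intervals sharing a color, the reserved band of $\eps W$ colors for re-coloring narrow intervals, the proper-graph circular-coloring conversion, and the extra-profit LP constraint that caps the quantity being averaged.
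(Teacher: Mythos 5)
Your proposal is correct and follows the paper's overall skeleton: the same two branches keyed to the test $P(S) \geq \beta\, OPT_{\fsapss}({\cJ})$, the same linear program with the extra-profit cap (constraint (\ref{const:bound_wide_profit})), the same TU-based rounding, the conversion to a circular contiguous coloring via properness, a cut of the circle, and the final balancing that gives $\beta = \frac{4}{5}$ and ratio $\frac{5}{4}$ (the paper's closing line ``taking $\beta = \frac54$'' is evidently a typo for this). Where you genuinely depart from the paper is the crucial cut-loss step, and your version is in fact the tighter one. The paper's Lemma \ref{lemma:final_coloring_in_acn} only shows that the cut retains a $\frac34$-fraction of $P(c'(S'_w))$, the profit of \emph{all} intervals assigned at least $\eps W$ colors, and the theorem's proof then invokes $\max_{0 < x \leq \beta} \frac{1}{\frac{3x}{4}+1-x}$, which implicitly requires $P(c'(S'_w)) \leq \beta\, P(c'({\cJ}))$. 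Constraint (\ref{const:bound_wide_profit}) does not directly give this: it caps only the extra profit $\sum_i p_i y_i$, whereas $P(c'(S'_w))$ also contains the base profit of the first $\lceil \eps W\rceil$ colors of each such interval, which can be an arbitrarily large fraction of the total (e.g., many time-disjoint intervals each assigned exactly $\lceil \eps W \rceil$ colors). Your argument---averaging the cut loss over all cut positions and charging it to the capped quantity---supplies exactly the missing step: for an interval with $L_i = x_i + y_i$ colors, the loss summed over all cuts is at most $L_i^2/4 \leq (L_i \lceil \eps W \rceil + m\, y_i)/4$ with $m = \lfloor (1-\eps)W \rfloor$, so the average loss is at most $\frac{1}{4}\sum_i p_i y_i + O(\eps) P(c'({\cJ})) \leq \left(\frac{\beta}{4} + O(\eps)\right) OPT_{\fsapss}({\cJ})$. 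One caveat: your phrase ``the best cut loses at most $\frac14 \beta\, OPT$'' is not literally achievable without the additive $O(\eps)\, OPT$ term coming from the base colors (a crossing interval can pointwise lose far more than its extra profit, and even on average the base parts contribute $\Theta(\eps)\,OPT$); this is absorbed by your rescaling of $\eps$, but it must appear explicitly in the accounting, together with a bound such as $P(c'({\cJ})) \leq \frac43\, OPT_{\fsapss}({\cJ})$ to control it. With that detail filled in, your route both matches the paper's plan and repairs/strengthens its key inequality.
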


We prove the theorem using the following results.
First, we consider the case where
there exists an optimal solution of {\fsap} in which
the profit from the intervals in ${\cJ}_{wide}$ that are assigned at least
$\eps W$ colors is
at least $\beta {OPT}_{\fsapss}({\cJ})$.
\begin{mylemma}
\label{lemma:profitable_wide}
If there exists an optimal solution of {\fsap} for ${\cJ}$ in which
the profit from the intervals in ${\cJ}_{wide}$ that are assigned at least
$\eps W$ colors is
at least $\beta {OPT}_{\fsapss}({\cJ})$, then
a solution of profit at least $\beta {OPT}_{\fsapss}({\cJ})$
can be found in polynomial time.
\end{mylemma}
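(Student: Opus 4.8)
The plan is to show that the solution $S$ produced in Step~\ref{alg:find_wide_solution} of Proper\_{\fsapss} already attains profit at least $\beta\,{OPT}_{\fsapss}({\cJ})$, and that this step runs in polynomial time. Fix an optimal solution $\mO^{*}$ of {\fsapss} for ${\cJ}$ as guaranteed by the hypothesis, and let $T \subseteq {\cJ}_{wide}$ be the set of wide intervals to which $\mO^{*}$ assigns at least $\eps W$ colors. Restricting $\mO^{*}$ to the intervals of $T$ (discarding the colors of all other intervals) yields a feasible {\fsapss} coloring: it is still contiguous, the color blocks of overlapping intervals remain pairwise disjoint, and no more than $W$ colors are used. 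Moreover, since the number of colors assigned to each interval of $T$ is an integer that is at least $\eps W$, it is in fact at least $\ceil{\eps W}$. Hence this restriction is a feasible candidate for the maximization in Step~\ref{alg:find_wide_solution}, and by the hypothesis its profit equals the profit $\mO^{*}$ accrues from $T$, which is at least $\beta\,{OPT}_{\fsapss}({\cJ})$. As Step~\ref{alg:find_wide_solution} returns a \emph{maximum}-profit solution of exactly this form, its profit is at least $\beta\,{OPT}_{\fsapss}({\cJ})$ as well.

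It remains to argue that Step~\ref{alg:find_wide_solution} can be carried out in polynomial time. The key structural observation is that in any feasible solution of the required form, at most $k := \floor{1/\eps}$ intervals of ${\cJ}_{wide}$ are simultaneously assigned colors at any point of the line: each occupies a contiguous block of at least $\ceil{\eps W}$ colors, these blocks are pairwise disjoint, and only $W$ colors are available. Thus the support graph of any such solution has clique number at most the constant $k$. Since the input graph is proper, the wide intervals admit a single linear order by their left endpoints that simultaneously orders their right endpoints, and I would sweep over their endpoints from left to right, maintaining as the dynamic-programming state the constant-size set of currently active colored intervals together with the contiguous color block assigned to each. At a left endpoint the algorithm decides whether to color the interval and, if so, which block to give it (respecting the blocks of the active intervals); at a right endpoint it frees the corresponding block; the stored value is the maximum profit accrued so far.

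The main obstacle is keeping this dynamic program genuinely polynomial rather than pseudo-polynomial, since a priori a color block may begin at any of the $W$ positions. To handle this I would establish that it suffices to consider a polynomial-size set of candidate block boundaries: in an optimal solution one may ``push'' each block downward so that its lower boundary is either $0$ or the upper boundary of another active block, and each block size may be taken to be $\ceil{\eps W}$, the interval's $r_{max}$, or the gap left by its neighbors. Consequently every boundary that ever arises is a sum of at most $k$ such values and can be drawn from a polynomial-size candidate set; with only $O(1/\eps)$ active intervals and polynomially many candidate boundaries per interval, the number of DP states and transitions is polynomial for fixed $\eps$. Proving this discretization, and verifying that each retained state corresponds to a genuinely feasible contiguous packing, is the technically delicate part of the argument; once it is in place, the profit bound of the first paragraph yields the lemma immediately.
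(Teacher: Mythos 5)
Your proposal takes essentially the same route as the paper's proof: restrict attention to the intervals in ${\cJ}_{wide}$ and to feasible solutions in which every colored interval receives at least $\eps W$ colors, observe that any such solution has at most $1/\eps$ active intervals at any time, find a maximum-profit solution of this form by dynamic programming, and conclude from the hypothesis (since the restriction of the optimal solution to these intervals is itself a feasible candidate) that its profit is at least $\beta\, {OPT}_{\fsapss}({\cJ})$. The only divergence is that the paper dispenses with the DP details by citing the bounded-load algorithm of \cite{SWZ13}, whereas you attempt to construct the DP explicitly and candidly leave the block-boundary discretization (the step needed to avoid pseudo-polynomiality in $W$) unproven --- precisely the detail the paper's citation absorbs.
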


\begin{proof}
We consider only the intervals in ${\cJ}_{wide}$ and the set of
feasible solutions $S$ with the property that each
interval in $S$ is assigned at least $\eps W$ colors. Note that for each such
feasible solution, at any time $t > 0$, the number
of active intervals is bounded by $1/\eps$, which is a constant. Hence, we can use dynamic programming
to find a solution of maximum profit among all these feasible
solutions.\footnote{This is similar to the dynamic programming algorithm for bounded load instances given in \cite{SWZ13}.
Thus, we omit the details.}
By our assumption, the
value of this solution is at least
$\beta {OPT}_{\fsapss}({\cJ})$.
\hfill \eod
\end{proof}

Next, we consider the complement case.
In Figure \ref{fig:LP_fba} we give the linear program used by
Algorithm ${\acn}$. 
For each $J_i \in {\cJ}_{narrow}$ the linear program has a variable
$x_i$ indicating the number of colors assigned to $J_i$.
For each $J_i \in {\cJ}_{wide}$, the linear program has
two variables: $x_i$ and $y_i$, where $x_i + y_i$ is the number of colors assigned to $J_i$;
$y_i$ gives the number of assigned colors ``over" the first
$\lceil \eps W \rceil$ colors.

Constraint (\ref{const:bound_wide_profit}) bounds the total profit from the extra allocation for each
interval $J_i\in {\cJ}_{wide}$.
The constraints (\ref{const:color_bound}) bound the total number of
colors used at any time $t>0$ by $(1-\eps)W$.
We note that the number of constraints in (\ref{const:color_bound}) is polynomial in $|{\cJ}|$,
since we only need to consider the ``interesting" points of time $t$,
when $t=s_i$ 
for some interval $J_i \in {\cJ}$, i.e., we have at most $n$ constraints.

\label{app:LP_fba}
\begin{figure}[h]
\begin{center}
\begin{minipage}{0.99\textwidth}
\begin{eqnarray}
({LP}_{\fbass}):~~ & \mbox{max} &
\displaystyle{\sum_{J_i \in {\cJ}_{narrow}} p_i x_i + \sum_{J_i \in {\cJ}_{wide}} p_i (x_i + y_i) }
\nonumber
\\
& \mbox{~~~s.t.~~}  & \displaystyle{\sum_{J_i \in {\cJ}_{wide}}
p_i y_i \leq \beta (1-\eps) OPT_{\fsapss}({\cJ})}
\label{const:bound_wide_profit}
\\
& {~~} &
\displaystyle{\sum_{\{ J_i \in {\cJ}: t \in J_i \}} x_i +
              \sum_{\{ J_i \in {\cJ}_{wide}: t \in J_i \}} y_i \leq (1-\eps) W~~}
\hbox{ } \forall  t > 0 ~~
\label{const:color_bound}
\\
& {~~} &
0 \le x_i \le \min\{ r_{max}(i), \lceil \eps W \rceil \} \hbox{~~~~~~~~~~~~~~for } ~ 1\leq i\leq n
\nonumber
\\
& {~~} &
0 \le y_i \le r_{max}(i)- \lceil \eps W \rceil \hbox{~~~~~~~~~~~~~~~~~~for } ~ J_i \in {\cJ}_{wide}
\nonumber
\end{eqnarray}
\end{minipage}
\end{center}
\caption{The linear program $LP_{\fbass}$}.\label{fig:LP_fba}
\end{figure}

\begin{mylemma}
\label{lemma:LP_BP_integral_solution}
For any $\eps > 0$, there is an integral solution of ${LP}_{\fbass}$ of
total profit at least $(1-2\eps){OPT}_{\fsapss}({\cJ})$.
\end{mylemma}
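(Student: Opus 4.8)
The plan is to combine the near-optimality of the LP value with the observation that the \emph{only} non-network constraint in ${LP}_{\fbass}$ is the single budget constraint~(\ref{const:bound_wide_profit}). First I would record the value bound, which is where the complementary case (every optimal {\fsap} solution accrues less than $\beta\,{OPT}_{\fsapss}({\cJ})$ from wide intervals receiving at least $\eps W$ colors) is used. Take an optimal {\fsap} solution and assign to every interval a $(1-\eps)$ fraction of the colors it receives there. This fractional point is feasible for ${LP}_{\fbass}$: the capacity constraints~(\ref{const:color_bound}) hold since the load is uniformly scaled by $(1-\eps)$; the budget constraint~(\ref{const:bound_wide_profit}) holds because a nonzero $y_i$ forces $J_i$ to have received more than $\eps W$ colors in the optimum, so $\sum_{J_i\in{\cJ}_{wide}}p_iy_i$ is at most $(1-\eps)$ times the wide profit, which is below $\beta(1-\eps)\,{OPT}_{\fsapss}({\cJ})$; and the objective equals $(1-\eps)\,{OPT}_{\fsapss}({\cJ})$. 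Hence the optimum of ${LP}_{\fbass}$ is at least $(1-\eps)\,{OPT}_{\fsapss}({\cJ})$, and it suffices to round an optimal basic solution $z^*=(x^*,y^*)$ while losing at most an additional $\eps\,{OPT}_{\fsapss}({\cJ})$.

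Next I would isolate the source of fractionality. Dropping constraint~(\ref{const:bound_wide_profit}), the remaining system, namely the load constraints~(\ref{const:color_bound}) together with the box constraints, is an interval incidence matrix (each column has consecutive ones in time), hence totally unimodular; with integral right-hand sides its polytope $P$ is integral. Therefore a basic optimum $z^*$ of ${LP}_{\fbass}$ is either a vertex of $P$, in which case it is already integral and feasible with profit at least $(1-\eps)\,{OPT}_{\fsapss}({\cJ})$ and we are done, or it lies in the relative interior of an edge of $P$ with constraint~(\ref{const:bound_wide_profit}) tight. In the latter case $z^*=\lambda v_1+(1-\lambda)v_2$ for two adjacent integral vertices $v_1,v_2$ of $P$ with $0<\lambda<1$; the edge direction $v_2-v_1$ is a single circuit of the network matrix, so the fractional coordinates of $z^*$ are exactly the support of $v_2-v_1$.

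I would then round by moving $z^*$ to an integral endpoint of this edge. One of $v_1,v_2$ satisfies~(\ref{const:bound_wide_profit}); I output that vertex, or, if the higher-profit endpoint violates~(\ref{const:bound_wide_profit}), I trim wide ``extra'' ($y$) colors along the circuit until~(\ref{const:bound_wide_profit}) is restored. Either way the result is integral and feasible for ${LP}_{\fbass}$, so the entire content of the lemma is in the profit accounting.

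The hard part will be bounding this rounding loss by $\eps\,{OPT}_{\fsapss}({\cJ})$. The loss is $(1-\lambda)\,p^{\top}(v_2-v_1)$, and since the circuit can alter the allocation of many intervals at once, no coordinate-wise (round-down) bound suffices. I would control it using the two kinds of slack built into the program. On one hand, the capacity in~(\ref{const:color_bound}) is only $(1-\eps)W$ while $x_i\le\lceil\eps W\rceil$ for every interval (narrow by definition, wide first-blocks by construction), so after fixing the rounded $y$-values the $x$-part is re-optimized against a totally unimodular residual system and packs with only an $O(\eps)$-fractional boundary effect, chargeable to the $\eps W$ capacity slack. On the other hand, the ``large'' part of the allocation consists solely of wide $y$-variables, and at most $1/\eps$ wide intervals are active at any time (each consuming at least $\eps W$ of the capacity), so the wide contribution can be reorganized along the circuit with relative loss $O(\eps)$ while~(\ref{const:bound_wide_profit}) caps its absolute magnitude by $\beta(1-\eps)\,{OPT}_{\fsapss}({\cJ})$. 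Showing that these effects together cost at most $\eps\,{OPT}_{\fsapss}({\cJ})$, i.e. that the single fractional circuit can be absorbed into the reserved $\eps W$ of capacity, is the crux of the argument.
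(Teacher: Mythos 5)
Your first step (scaling an optimal {\fsap} solution by $(1-\eps)$ to show that the optimum of ${LP}_{\fbass}$ is at least $(1-\eps)\,{OPT}_{\fsapss}({\cJ})$, using the complementary-case hypothesis to verify constraint~(\ref{const:bound_wide_profit})) is sound and matches what the paper establishes before the lemma. The gap is in the rounding. You correctly observe that dropping~(\ref{const:bound_wide_profit}) leaves a consecutive-ones (TU) system, so a basic optimum of the full LP lies on an edge of the integral polytope $P$, but your plan for getting from that edge point to a feasible integral point is exactly the part you leave unproven, and it is not a routine omission: the feasible endpoint of the edge can have profit smaller than the LP optimum by $\lambda\,p^{\top}(v_2-v_1)$, where the circuit direction $v_2-v_1$ may shift a large amount of allocation between intervals with very different profits, and neither the $\eps W$ capacity slack nor the bound ``at most $1/\eps$ wide intervals are active at a time'' controls this quantity. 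Similarly, your trimming variant loses, beyond the budget violation itself, a term of the form $\lambda\bigl[(g(v_2)-g(v_1))-(p(v_2)-p(v_1))\bigr]$ (with $g$ the left-hand side of~(\ref{const:bound_wide_profit})), which measures how the $x$-profit varies along the circuit and is likewise unbounded by your slack arguments. So the ``crux'' you defer is genuinely the whole difficulty, and the edge/circuit route gives no apparent way to resolve it.

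The idea you are missing is a decoupling that makes the troublesome constraint harmless \emph{before} any polyhedral argument. The paper first normalizes the fractional optimum so that $y^*_i>0$ implies $x^*_i=\lceil \eps W\rceil$ (moving mass from $y_i$ to $x_i$ never hurts: the two variables appear identically in~(\ref{const:color_bound}) and in the objective, and this only loosens~(\ref{const:bound_wide_profit})). It then rounds every $y^*_i$ \emph{down} to $\lfloor y^*_i\rfloor$. This is where the constraint~(\ref{const:bound_wide_profit}) is handled for free, since it involves only $y$-variables and they only decrease; and the loss is at most $p_i$ per affected interval, which is at most a $1/\lceil\eps W\rceil\le\eps$ fraction of $p_i(x^*_i+y^*_i)$ precisely because $x^*_i=\lceil\eps W\rceil$ (this needs $W\ge 1/\eps^2$). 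Finally, with the $y$-values fixed at these integers, the residual LP in the $x$-variables alone has an interval incidence matrix, hence is TU, so an integral $x$ of no smaller profit exists. No basic-solution or circuit analysis is needed, and the two $\eps$-losses (one from the capacity reduction to $(1-\eps)W$, one from rounding the $y$'s) give exactly the $(1-2\eps)$ factor in the statement.
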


\begin{proof}
Let ${\bar x}^*,{\bar y}^*$ be an optimal (fractional) solution of
${LP}_{\fbass}$. Note that, by possibly move value from $y^*_i$ to
$x^*_i$, we can always find such a solution in which
for any $J_i \in {\cJ}_{wide}$,
if $y^*_i >0$ then $x^*_i = \lceil \eps W \rceil$.

Let $S_w$ be the set of intervals in ${\cJ}_{wide}$ for which $y_i>0$.
Consider now the solution ${\bar x}^*,{\bar z}$,
where $z_i = \lfloor y^*_i \rfloor$, for all $J_i \in S_w$.
We bound the loss due to the rounding down.
\[
\begin{array}{ll}
\sum_{J_i\in S_w} p_i (x_i +\lfloor y_i \rfloor) & \geq
\sum_{J_i\in S_w} p_i (x_i + y_i -1) \\
& \geq
\sum_{J_i\in S_w} p_i (x_i + y_i) (1- \frac{1}{x_i}) \\
& \geq
\sum_{J_i\in S_w} p_i (x_i + y_i) (1- \frac{1}{\lceil \eps W \rceil})
\end{array}
\]
The last inequality follows from the fact that $x_i = \lceil \eps W \rceil$, for any $J_i \in S_w$. Thus, for
$W \geq 1/\eps^2$, we have that the total profit after rounding the $y_i$ values is at least
$(1- 2\eps) {OPT}_{\fbass}({\cJ})$.

Now, consider the linear program $LP_{round}$,
in which we fix
$b_i= \lfloor y^*_i \rfloor$, for $J_i \in S_w$.
\begin{figure}[h]
\begin{center}
\begin{minipage}{0.95\textwidth}
\begin{eqnarray}
({LP}_{round}):~~ & \mbox{max} & \displaystyle{\sum_{i=1}^n p_i x_i}
\nonumber
\\
& \mbox{~~s.t.~~} &
\displaystyle{\sum_{\{ J_i \in {\cJ}: t \in J_i \}} x_i +
              \sum_{\{ J_i \in S_w: t \in J_i \}} b_i \leq (1-\eps) W}
\hbox{ ~} \forall  t > 0
\nonumber
\\
& {~~} &
0 \le x_i \le \min\{ r_{max}(i), \lceil \eps W \rceil \} \hbox{~~~~~~~~~~~~~~for } ~ 1\leq i\leq n
\nonumber
\end{eqnarray}
\end{minipage}
\end{center}
\end{figure}
We have that ${\bar x}^*$ is a feasible (fractional) solution of
${LP}_{round}$ with profit at least
$(1-2\eps) {OPT}_{\fbass}({\cJ}) - \sum_{J_i \in S_w} p_i b_i$.
The rows of the coefficient matrix of ${LP}_{round}$ can be permuted
so that the time points associated with the rows form an increasing
sequence. In the permuted matrix each column has consecutive 1's, thus this matrix
is {\em totally unimodular (TU)}. It follows that
we can find in polynomial time an integral solution, ${\bar x}_I^*$, of the same total profit. Thus,
the integral solution  ${\bar x}_I^*, {\bar z}$ satisfies the lemma.
\hfill \eod
\end{proof}

\comment{
\begin{lemma}
\label{lemma:scaling_down_colors}
Let $P(c_{LP}(S))$ be the total profit from the coloring $c_{LP}$ generated in Step \ref{alg:round_LP_BP_solution}.
of ${\acn}$.
Then the number of colors assigned to the intervals can be scaled down in polynomial time to
obtain a valid coloring $c$ satisfying $\sum_{J_i| t \in J_i} |c(J_i)| \leq  W_{\eps}$, for all $t >0$, whose
total profit is at least $(1- 2\eps)P(c_{LP}(S))$.
\end{lemma}

\begin{proof}
Given the coloring $c_{LP}$, we use the following linear program to scale down the number of
colors assigned to the intervals. Let $P_{\eps}= \lfloor (1-\eps) P(c_{LP}(S)) \rfloor$, and let
$0 \leq x_i \leq \lfloor (1-\eps) W \rfloor$ be the (fractional) number of colors assigned to
$J_i$ in the scaled solution. Then we formulate our scaling problem as the following
linear program.
\begin{figure}[th]
\begin{center} \fbox{
\begin{minipage}{0.95\textwidth}
\begin{eqnarray}
({LP}_{sc}):~~ & \mbox{maximize} & \displaystyle{\sum_{i=1}^n p_i x_i}
\nonumber
\\
& {~~} &
\displaystyle{\sum_{\{ J_i: t \in J_i \}}} x_i \leq (1-2 \eps)W \hbox{ ~~~~~~~~~ for all }
t > 0,~~~~~~~~~~~~~~~~~~ \label{const:scale_down_coloring}
\\
& {~~} &
0 \leq x_i \leq  \lceil W_{\eps} \rceil \} \hbox{~~~~~~~~~~~~~~~~~for } ~ 1\leq i\leq n,
\nonumber
\end{eqnarray}
\end{minipage}
}
\end{center}
\end{figure}
We note that, taking $x_i = (1-2\eps)|c_{LP}(J_i)$, we have a fractional feasible solution for $LP{sc}$, and
since the coefficient matrix is TU, we can find in polynomial time an optimal integral solution for $LP{sc}$.
Also, w.l.o.g. we may assume that $\min\{ W, P(c_{LP}(S)) \} \geq \frac{1}{\eps}$. Thus, we have the
statement of the lemma.
\hfill \eod
\end{proof}
}

\begin{mylemma}
\label{lemma:final_coloring_in_acn}
Let $P(c'(S'_w)), P(c'(S'_n))$ be the total profit from the
intervals in the sets $S'_w$ and $S'_n$ in the
circular coloring $c'$ generated in Step \ref{alg:circular_color}. of ${\acn}$. Then,
$(i)$ $P(c''(S'_w)) \geq \frac{3}{4} P(c'(S'_w))$, and $(ii)$ $P(c''(S'_n)) = P(c'(S'_n))$.
\end{mylemma}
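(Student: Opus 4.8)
The plan is to prove the two parts separately, starting with the easier profit-preservation claim $(ii)$ for narrow intervals and then establishing the $\frac34$ bound $(i)$ for wide intervals via an averaging argument over the candidate cut points. Throughout, let $W' = \lfloor (1-\eps)W \rfloor$ be the number of colors placed on the circle in $c'$, let $c''$ be the coloring obtained by cutting the circle at the edge (between consecutive color positions) that Algorithm ${\acn}$ selects, and for $J_i \in S'_w$ write $w_i = |c'(J_i)| \le W'$ for the size of its (contiguous, possibly wrapping) circular block.

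For $(ii)$ the key structural observation is that at any fixed time $t$ the intervals active at $t$ receive pairwise disjoint color blocks in $c'$, so at most one of these blocks can contain the cut edge; hence at every time at most one narrow interval is \emph{split} by the cut. Consequently the narrow intervals split by the cut are pairwise non-overlapping in time, and at any instant the total demand to be rerouted is that of a single narrow interval, which is strictly less than $\lceil \eps W \rceil$. Since ${\acn}$ reroutes each such interval to a contiguous block of the same size inside the reserved range $[(1-\eps)W, W]$ of $\lceil \eps W \rceil$ fresh colors, and these intervals never conflict, the rerouting is feasible and assigns each split narrow interval exactly as many colors as in $c'$. Narrow intervals not split by the cut keep their colors verbatim once the circle is linearized. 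Hence $P(c''(S'_n)) = P(c'(S'_n))$.

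For $(i)$ I would argue by averaging the profit loss over the $W'$ candidate cut edges. If the cut falls at an internal edge of the block $B_i$ of a wide interval $J_i$, splitting it into pieces of sizes $a$ and $w_i-a$, then keeping the larger piece costs $p_i \min(a, w_i-a)$, while cuts outside $B_i$ cost $J_i$ nothing. Summing the total loss over all $W'$ candidate edges therefore rearranges to $\sum_{J_i \in S'_w} p_i \sum_{a=1}^{w_i-1} \min(a, w_i-a) \le \sum_{J_i \in S'_w} p_i \frac{w_i^2}{4}$, using the elementary identity $\sum_{a=1}^{w-1}\min(a,w-a)\le w^2/4$. Since $w_i \le W'$ gives $w_i^2 \le W' w_i$, the summed loss is at most $\frac{W'}{4}\sum_{J_i \in S'_w} p_i w_i = \frac{W'}{4} P(c'(S'_w))$. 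Because narrow intervals incur no loss (by $(ii)$), the cut that ${\acn}$ chooses to minimize the total harm minimizes exactly the wide loss, and this minimum is at most the average over the $W'$ candidates, namely $\frac14 P(c'(S'_w))$. Therefore $P(c''(S'_w)) \ge \frac34 P(c'(S'_w))$, as claimed.

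The main obstacle is the double-counting step in $(i)$: one must verify that every internal edge of every wide block appears among the $W'$ candidate cuts and is counted once, so that the per-cut losses sum cleanly to $\sum_i p_i \sum_a \min(a,w_i-a)$, and then that the minimizing cut is bounded by the average. The bound $w_i \le W'$ (each block fits on the circle) is precisely what converts the quadratic loss into a $\frac14$-fraction of the profit, so it is worth stating explicitly. The remaining delicate point is the feasibility of the narrow rerouting in $(ii)$, whose argument hinges entirely on the disjointness of simultaneously active color blocks in $c'$: this guarantees both that at most one narrow interval is split at any time and that the reserved $\lceil \eps W \rceil$ colors always suffice.
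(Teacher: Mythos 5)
Your proof of part $(ii)$ is essentially the paper's own argument: since $c'$ is a valid coloring, the blocks of intervals that are active at a common time are pairwise disjoint, so at most one interval can contain the cut color at any time; hence the narrow intervals that are split form an independent set and can all be reassigned the same block of fresh colors in $\{\lfloor(1-\eps)W\rfloor+1,\ldots,W\}$ with their sizes intact. For part $(i)$, however, you take a genuinely different route: the paper does not reprove the $\frac{3}{4}$ bound at all, but cites a result of \cite{SWZ13} for the coloring produced in Steps \ref{alg:wide_intervals_in_lgood}--\ref{alg:cut_wide_to_noncircular}, whereas your averaging argument (cuts outside a block cost nothing, the identity $\sum_{a=1}^{w-1}\min(a,w-a)\le w^2/4$, and the bound $w_i\le \lfloor(1-\eps)W\rfloor$ that turns the quadratic loss into a quarter of the block's profit) is a self-contained derivation of that bound, and in that sense more informative than the citation.

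The gap is in matching your argument to what ${\acn}$ actually does. You average the loss over all $W'=\lfloor(1-\eps)W\rfloor$ cut edges and then conclude that the algorithm's cut, being a minimizer, is bounded by that average. But ${\acn}$ does not examine $W'$ candidates: after the rounding of Step \ref{alg:round_wide_color_assign} it evaluates only the $O(1/\eps^2)$ grid points $\ell=\lfloor r\cdot\eps^2 W\rfloor$ --- deliberately so, since $W$ need not be polynomial in the input size, and enumerating every edge would not be polynomial time. The minimum of the loss over this restricted candidate set need not be bounded by the average over all $W'$ edges, so your final inequality does not apply to the algorithm's actual output $c''$ as written. The repair uses precisely the rounding step your argument bypasses: after Step \ref{alg:round_wide_color_assign}, every wide block starts at a grid point and consists of an integral number $k_i$ of grid segments, so all of its $k_i-1$ internal grid edges appear among the algorithm's candidates, with splits that are multiples of the segment length. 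Redoing your double counting over only the $N$ grid candidates gives total summed loss at most $\sum_{J_i\in S'_w} p_i\,w_i k_i/4\le (N/4)\sum_{J_i\in S'_w} p_i w_i$, so the minimizing grid cut again loses at most $\frac{1}{4}P(c'(S'_w))$. With that substitution, your proof of $(i)$ goes through.
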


\begin{proof}
Property $(i)$ follows from a result of \cite{SWZ13}, which implies that
the non-circular contiguous coloring $c''$, obtained for $S'_w$ in Steps
\ref{alg:wide_intervals_in_lgood}$-$\ref{alg:cut_wide_to_noncircular}
 of ${\acn}$, has a total profit at least $\frac{3}{4} P(c'(S'_w))$. As shown in \cite{SWZ13},
this algorithm can be implemented in polynomial time.

Also, $(ii)$ holds since (in Steps
\ref{alg:narrow_intervals_in_lgood}$-$\ref{alg:recolor_narrow_to_noncircular})
Algorithm ${\acn}$
assigns $|c'(J_i)|$ colors to any $J_i \in S'_n$, i.e., in the resulting non-circular contiguous coloring, $c''$, we
have $|c''(J_i)|=|c'(J_i)|$, $\forall~ J_i \in S'_n$. We note that $c''$
is a valid coloring, since at most one interval (in particular, an interval $J_i \in S'_n$) can be assigned the color
 $\ell_{\mathrm good}$ at any time $t >0$. It follows, that the intervals $J_i$ in $S'_n$ that contain in $c'(J_i)$
$\ell_{\mathrm good}$ form an independent set. Since, for any $J_i \in S'_n$, $r_{max}(i) \leq \lceil \eps W \rceil$, we can assign
to all of these intervals the same set of new colors in the range $\{ \lfloor (1-\eps)W \rfloor +1, \ldots , W \}$.
\hfill \eod
\end{proof}

\noindent
{\bf Proof of Theorem \ref{thm:proper_fsap_ratio}:}
Let $0 < \beta <1$ be the parameter used by Proper\_{\fsap}.
For a correct guess of ${OPT}_{\fsapss}({\cJ})$ and a fixed value of
$\eps >0$, let $\hat \eps = \eps/3$.
\begin{enumerate}
\item[(i)]
If there is an optimal solution in which
the profit from the intervals in ${\cJ}_{wide}$ that are assigned at least
$\eps W$ colors is at least $\beta {OPT}_{\fsapss}({\cJ})$
then, by Lemma \ref{lemma:profitable_wide}, Proper\_{\fsap} finds in Step
\ref{alg:find_wide_solution}
a solution of profit at least $\beta OPT_{\fsapss}({\cJ})$ in polynomial time.
\item[(ii)]
Otherwise, consider the coloring $c''$ output by ${\acn}$. Then, using Lemma \ref{lemma:final_coloring_in_acn},
we have that
\[
\begin{array}{lll}
\displaystyle{\frac{P(c'({\cJ}))}{P(c''({\cJ}))}} & = \displaystyle{\frac{P(c'(S'_w)) +
 P(c'(S'_n))}
{\frac{3P(c'(S'_w))}{4} + P(c'(S'_n))}}
& \leq \displaystyle {\max_{0 < x \leq \beta} \frac{1}{\frac{3x}{4} + 1-x} = \frac{4}{4 -\beta} }
\end{array}
\]
Now, applying Lemma \ref{lemma:LP_BP_integral_solution}, and the rounding in Step \ref{alg:round_wide_color_assign}
 of ${\acn}$ with $\hat \eps$,
we have that
$ P(c'({\cJ})) \geq (1-3 {\hat \eps}) {OPT}_{\fsapss}({\cJ})= (1-\eps){OPT}_{\fsapss}({\cJ})$.
\end{enumerate}
The theorem follows by taking $\beta= \frac{5}{4}$.
\hfill \eod


\section{Uniform {\fsap} Instances}
\label{sec:fsap_ptas}

We now consider {\em uniform} instances of {\fsap}, in which
$r_{max}(i)=\MaxW$, for some $1 \leq \MaxW \leq W$, and $p_i=1$, for all $1 \leq i \leq n$.
Let $k = \lceil W/\MaxW \rceil$.

We first prove that if $k=W/\MaxW$
(i.e., $W$ is an integral multiple of $\MaxW$) then {\fsap} can be solved optimally by finding a
maximum $k$-colorable subgraph in $G$, and assigning to each interval
in this  subgraph $\MaxW$
contiguous colors.
In contrast to this case if $k > W/\MaxW$, then
we show that {\fsap} is NP-Hard, and give a polynomial approximation
scheme to solve such instances.

\begin{mylemma}
\label{thm:uniform_fsap_kint}
An {\fsap} instance where for all $1 \leq i \leq n$,
$r_{max}(i)=\MaxW$, and $W$ is multiple of $\MaxW$ can be solved
exactly in polynomial time.
\end{mylemma}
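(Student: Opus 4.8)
The plan is to reduce this uniform {\fsap} instance to the problem of computing a maximum-cardinality $k$-colorable induced subgraph of the interval graph $G$, where $k = W/\MaxW$ is an integer by hypothesis. The construction is as follows: compute a set $S \subseteq \cJ$ of maximum size whose induced subgraph $G[S]$ is $k$-colorable, fix a proper $k$-coloring $\chi : S \to \{1,\dots,k\}$, and partition the palette $\{1,\dots,W\}$ into $k$ consecutive blocks $B_j = \{(j-1)\MaxW+1,\dots,j\MaxW\}$, each of exactly $\MaxW$ colors (here the assumption $W = k\MaxW$ is essential, so that block $B_k$ still has full size $\MaxW$). I would then assign to every $J_i \in S$ with $\chi(J_i)=j$ the contiguous block $B_j$. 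Since overlapping intervals are adjacent in $G$ and hence receive distinct values of $\chi$, they are allotted disjoint blocks; thus at every time $t$ the active intervals of $S$ occupy pairwise disjoint blocks totalling at most $k\MaxW = W$ colors. This is a feasible contiguous coloring of profit exactly $\MaxW \cdot |S|$, as each interval of $S$ receives its full $\MaxW$ colors.

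It then remains to prove optimality, i.e.\ that no feasible {\fsap} solution accrues more than $\MaxW \cdot |S|$. For this I would pass to a fractional relaxation. Given any feasible coloring $c$, set $f_i = |c(J_i)|/\MaxW \in [0,1]$; disjointness of the color blocks at each time point gives $\sum_{\{J_i : t \in J_i\}} f_i \le W/\MaxW = k$ for every $t$, while the profit equals $\MaxW \sum_i f_i$. Hence the optimum is bounded by the LP that maximizes $\sum_i f_i$ subject to $0 \le f_i \le 1$ and $\sum_{\{J_i : t \in J_i\}} f_i \le k$ over the at most $n$ relevant time points (the left endpoints). The coefficient matrix of these clique constraints has the consecutive-ones property in its columns, since each interval covers a contiguous range of the sorted time points; it is therefore totally unimodular, and with the integral right-hand side $k$ the LP has a $0/1$ optimal solution. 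Its support $S^*$ satisfies $|\{i \in S^* : t \in J_i\}| \le k$ at every $t$, so $G[S^*]$ has clique number at most $k$; since interval graphs are perfect, $\chi(G[S^*]) \le k$, meaning $S^*$ is $k$-colorable and hence $|S^*| \le |S|$. Thus the LP value is exactly $|S|$, the optimum profit is at most $\MaxW \cdot |S|$, and the construction matches this bound.

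Finally, the set $S$ itself is computable in polynomial time, since the maximum $k$-colorable subgraph problem is solvable in polynomial time on interval (indeed chordal) graphs, e.g.\ via the Yannakakis--Gavril network-flow formulation. I expect the only real subtlety to lie in the optimality direction, namely in arguing that neither the contiguity requirement nor fractional color counts can beat the $k$-colorable-subgraph bound; the total unimodularity of the interval constraint matrix is exactly what makes this clean. I would also emphasize that the hypothesis $W = k\MaxW$ is precisely what allows the construction to realize the bound, which is consistent with the claimed NP-hardness in the complementary case $k > W/\MaxW$.
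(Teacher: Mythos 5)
Your proof is correct, and it arrives at the same characterization as the paper --- the optimum equals $\MaxW$ times the size of a maximum $k$-colorable subgraph of $G$, realized by giving each color class of a proper $k$-coloring one contiguous block of $\MaxW$ colors --- but your optimality argument takes a genuinely different route. The paper proves the upper bound combinatorially: it runs its algorithm Paging\_{\fba} on the relaxed (non-contiguous) {\fbap} instance and shows by induction over the scanned endpoints that $|\AVAIL|$ stays a multiple of $\MaxW$, so some optimal {\fbap} solution gives every job either $0$ or $\MaxW$ colors; the reduction to maximum $k$-colorable subgraph, and the observation that such a solution can be colored contiguously, then finish the argument. You instead bound any feasible solution by an LP with clique constraints at the left endpoints, invoke total unimodularity of the consecutive-ones constraint matrix to extract a $0/1$ optimum, and use perfection of interval graphs to turn the clique bound $k$ into $k$-colorability. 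Your route is more self-contained --- it does not lean on the optimality of Paging\_{\fba} (Theorem~\ref{thm:opt_fba}) --- and it generalizes immediately to non-uniform profits $p_i$, since the LP argument and the weighted maximum $k$-colorable subgraph problem on interval graphs remain polynomial, whereas the paper's argument is tied to unit profits; the paper's route is shorter in context, reusing machinery it already built, and yields the slightly stronger structural fact that the {\fbap} and {\fsap} optima coincide in this case. One inaccuracy in your closing aside: with $k$ part of the input (as here, $k=W/\MaxW$), maximum $k$-colorable subgraph is polynomial on \emph{interval} graphs via the flow formulation, but it is NP-hard on chordal graphs, so the parenthetical ``indeed chordal'' should be dropped; this does not affect your proof, which only needs the interval-graph case.
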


\begin{proof}
Consider an
{\fbap} instance where for all $1 \leq i \leq n$,
$r_{max}(i)=\MaxW$, and $W$ is multiple of $\MaxW$. We claim that
there exists an optimal solution to this {\fbap} instance in which each
job gets either 0 or $\MaxW$ colors. To see that consider the behavior
of the algorithm Paging\_{\fba} on such an instance, i.e., in which
$r_{min}(i)=0$ and $r_{max}(i)=\MaxW$. Note that when we
start the algorithm $|\AVAIL|$ is a multiple of $\MaxW$ and thus when
the first (left) endpoint $s_1$ is considered the job $J_1$ is
allocated $\MaxW$ colors and  $|\AVAIL|$ remains a multiple of
$\MaxW$. Assume inductively that as the endpoints are scanned all the
jobs that were allocated colors up to this endpoint were allocated
$\MaxW$ colors and that $|\AVAIL|$ is a multiple of $\MaxW$.
If the current scanned endpoint is a right endpoint $e_i$ then by the
induction hypothesis either 0 or $\MaxW$ colors are added to $\AVAIL$
and nothing else is changed. Suppose that the scanned endpoint is a
left endpoint $s_i$. If $|\AVAIL|>0$ then $\MaxW$ colors from $\AVAIL$
are allocated to $J_i$. Otherwise, $J_i$ may be allocated colors that
were previously allocated to another job $J_\ell$, for $\ell<i$.
However, in this case because $r_{min}(\ell)=0$ all the $\MaxW$ colors
allocated to $J_\ell$ will be moved to $J_i$, and the hypothesis still
holds.

The optimal way to assign either 0 or $\MaxW$ colors to each job is
given by computing the maximum $k=(W/\MaxW)$-colorable subgraph of $G$ and
assigning $\MaxW$ colors to each interval in the maximum $k$-colorable
subgraph. Since the assignment of these colors can be done
contiguously it follows that this is also the optimal solution to the
respective {\fsap} instance.
\hfill \eod
\end{proof}

In the Appendix, we give a proof of hardness in case $k>W/\MaxW$.
\begin{theorem}
\label{thm:hard_uniform_fsap}
{\fsap} is strongly NP-hard even if for all $1 \leq i \leq n$
$r_{max}(i)=3$, and $W$ is not divisible by 3.
\end{theorem}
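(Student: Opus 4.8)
The plan is to give a polynomial-time reduction from a strongly NP-hard source problem to the decision version of uniform {\fsap}, in which we ask whether the optimal profit reaches a prescribed target. Since $p_i=1$ and $r_{max}(i)=3$, the profit of any solution is exactly the total number of colors handed out, so the natural target is $3n$: the optimum equals $3n$ precisely when every interval can be \emph{simultaneously} assigned a contiguous block of three colors in $\{1,\dots,W\}$, with overlapping intervals receiving disjoint blocks. It therefore suffices to prove that this \emph{uniform contiguous feasibility} question is strongly NP-hard when $3 \nmid W$. I would start from a strongly NP-hard problem whose instances are purely combinatorial (so that strong hardness is automatic once every constructed coordinate and the value $W$ are bounded by a polynomial in the input size), ideally one carrying an inherent ``triple'' structure; the reduction then encodes each combinatorial choice of the source as a \emph{placement} choice of a size-three block inside the strip.

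The engine of the reduction is the contrast with Lemma~\ref{thm:uniform_fsap_kint}. When $3 \mid W$ the strip splits into $W/3$ aligned tracks and the problem collapses to maximum $k$-colorability of an interval graph, which is easy; the entire source of difficulty is the single (or double) leftover color forced by $3 \nmid W$. I would exploit this by building \emph{filler} intervals whose role is to occupy colors so that, at each relevant time, exactly $W \bmod 3$ colors remain free and the size-three blocks can no longer be grid-aligned. Each gadget then presents a block with two (or more) admissible positions differing by whether the leftover unit sits above or below it, and this degree of freedom is used to represent a variable, element, or assignment of the source. Overlapping \emph{consistency} intervals link the gadgets so that a globally feasible packing exists if and only if the local placement choices are mutually consistent, i.e., iff the source instance is a YES-instance.

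Completeness is the easy direction: from a solution of the source I would read off, gadget by gadget, a consistent set of block placements and verify directly that the resulting blocks are pairwise color-disjoint wherever their intervals overlap, so that every interval receives three colors and the profit is $3n$. The delicate direction is soundness: I must show that \emph{any} contiguous coloring attaining profit $3n$ necessarily respects the gadget structure and hence decodes to a valid source solution. Here the main obstacle is ruling out ``cheating'' packings$-$block placements not of the intended aligned form that nonetheless manage to seat all blocks$-$which calls for a careful invariant argument tracking, along the time axis, how the $W \bmod 3$ free colors propagate and showing that the forced fragmentation leaves no slack for inconsistent choices. Once both directions are in place, strong NP-hardness follows because all numeric parameters of the constructed instance (interval endpoints and $W$) are polynomially bounded in the size of the source instance.
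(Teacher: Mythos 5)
Your plan founders at its very first step: the decision threshold you choose makes the problem polynomial-time solvable, so no reduction to it can prove NP-hardness. You ask whether the optimum equals $3n$, i.e., whether every interval can simultaneously receive a contiguous block of $3$ colors. But this ``uniform contiguous feasibility'' question is easy: a necessary condition is that at every time point at most $\lfloor W/3 \rfloor$ intervals are active (each active interval occupies $3$ of the $W$ colors), and this condition is also sufficient, because interval graphs are perfect, so a clique bound of $k=\lfloor W/3\rfloor$ gives a proper $k$-coloring of the interval graph, and one can then hand the $j$-th color class the \emph{aligned} block $\{3(j-1)+1,3(j-1)+2,3(j-1)+3\}$. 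Thus $OPT=3n$ if and only if the maximum clique of the interval graph is at most $\lfloor W/3\rfloor$, which is checkable in polynomial time by a sweep. Your intuition that the leftover $W \bmod 3$ colors are the source of hardness is correct in spirit, but in the saturation regime those leftover colors are simply never used, and all blocks can be grid-aligned; your ``cheating packings'' obstacle never materializes because there is nothing to cheat about. The hardness of uniform {\fsap} lives entirely in the regime where full saturation is impossible or suboptimal, and the solver must trade off giving some intervals $3$ colors, others $1$ or $2$, and others none.

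This is exactly how the paper's proof operates, and it is structurally incompatible with your target. The paper reduces from 3-Exact-Cover and builds ``two-choice'' gadgets, border intervals, and element/filler overlapping pairs, with a profit threshold $(42m+14)P^2+405m^2+45n^2+16m$ at which the optimal solution necessarily assigns $3$ colors to the high-profit intervals, a \emph{single} color to some gadget intervals (this is where $W=9m+7\not\equiv 0 \pmod 3$ bites), and nothing to others; the combinatorial choice of \emph{which} intervals get the partial allocations encodes the exact cover. Since that construction needs unequal per-unit profits, the paper finishes with a slicing trick that replaces each weighted interval by parallel unit-profit copies. Beyond the fatal threshold issue, your proposal also never commits to a source problem, a concrete gadget, or a soundness argument, so even the parts of the plan that could in principle be salvaged (the filler intervals, the consistency linking) remain unverifiable; but the first repair you would need is to abandon the $3n$ target and design a threshold at which partial allocations carry the encoding.
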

\subsection{An Approximation Scheme}
\label{sec:approx_scheme}
We now describe a PTAS for uniform istances of {\fsap}.
Fix $\eps >0$, and let ${\cJ}$ be a uniform input for {\fsap}.
Denote by $OPT_{\fsapss}(\cJ)$ the value of an
optimal solution for instance $\cJ$ of {\fsap}.

The scheme handles separately  two subclasses of
inputs, depending
on the value of $\MaxW$. First, we consider the case where
$\MaxW$ is {\em large} relative to $W$, or more precisely
$k= \lceil W/\MaxW \rceil \leq 1/\eps$.
We prove in the next lemma that in this case if we partition the
colors into a constant number of
contiguous strips and limit our solution to always assigning
all the colors in the same strip to the same job, we can find a solution
that is at least $(1-\eps)OPT_{\fsapss}({\cJ})$.
The size of each strip (except possibly the last one) is $\lfloor \eps
\MaxW/4 \rfloor \geq 1$.
Since the number of strips is $O(1/\eps^2)$,
we can find an approximation in this case using
dynamic programming as shown in \cite{SWZ13}.

\begin{mylemma}
\label{lemma:MaxLarge_structure}
Any optimal solution for a uniform input ${\cJ}$ for {\fsap} where $k=
\lceil W/\MaxW \rceil \leq 1/\eps$
can be converted to a solution where each interval
is assigned a number of colors that is an integral number of strips, and the total profit is at least
$(1-\eps) OPT_{\fsapss}({\cJ})$.
\end{mylemma}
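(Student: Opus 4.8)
The plan is to fix the strip partition once and then transform a single optimal solution into a strip-aligned one while discarding at most an $\eps$ fraction of the profit. Write $s=\lfloor \eps\MaxW/4\rfloor\ge 1$ for the strip size; since $k=\lceil W/\MaxW\rceil\le 1/\eps$ we have $W/\MaxW\le 1/\eps$, i.e. $W\le \MaxW/\eps$, so the number of strips is $N=\lceil W/s\rceil=O(1/\eps^2)$. This is exactly the bound that afterwards lets the bounded-load dynamic program of \cite{SWZ13} search over all strip-aligned solutions, so it suffices to exhibit one such solution of profit at least $(1-\eps){OPT}_{\fsapss}({\cJ})$. Let $c_i$ and $B_i=[\ell_i,\ell_i+c_i-1]$ denote the number of colors and the contiguous color block assigned to $J_i$ in the fixed optimal solution, recalling that in a uniform instance the profit equals the total number of allocated colors, $\sum_i c_i={OPT}_{\fsapss}({\cJ})$.

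First I would define the transformation: for each interval keep only the strips lying entirely inside $B_i$, obtaining a new block $B_i'\subseteq B_i$ that is a contiguous union of whole strips, so the new allocation $c_i'$ is an integral number of strips. Because $B_i'\subseteq B_i$ and the original blocks of simultaneously active intervals are pairwise disjoint, the snapped blocks stay disjoint and inside $[1,W]$; hence feasibility and contiguity are preserved automatically and no repacking is needed. The entire content of the lemma is therefore the profit estimate $\sum_i c_i'\ge (1-\eps)\sum_i c_i$.

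To control the loss I would average over the $s$ possible offsets of the strip grid. For a block of length $c_i\ge s$ the expected number of colors discarded at its two partial end-strips is less than $s$, so such intervals lose at most an $s/c_i$ fraction, which is at most $\eps$ once $c_i\ge s/\eps\approx \MaxW/4$. The delicate case, and the main obstacle, is intervals whose allocation is smaller than (a constant multiple of) a strip: a block of length $c_i<s$ contains no whole strip and is discarded in full, and shifting does not rescue it. To handle these I would exploit the uniform structure, using that an optimal uniform solution can be taken to allocate each interval either $\MaxW$ colors or a single remainder value $W\bmod\MaxW$, and that at any time at most $k\le 1/\eps$ intervals carry a large (roughly $\MaxW$) allocation. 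Each small or remainder interval would then be promoted to one whole strip, which is legal since $s\le\MaxW$, and packed into the $N=O(1/\eps^2)$ strips left free once the wide intervals have been strip-aligned.

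Putting the two cases together, the total discarded profit is at most $\eps\sum_i c_i$, so the strip-aligned solution has profit at least $(1-\eps){OPT}_{\fsapss}({\cJ})$, and the bounded-load DP of \cite{SWZ13} finds one at least this good. I expect the genuinely technical point to be the bookkeeping in the small-allocation case: simultaneously keeping the number of strips used at every time at most $N$ while charging the at-most-one-strip-per-interval promotion loss against ${OPT}_{\fsapss}({\cJ})$. By contrast, the wide-interval rounding via the shifting average, and the preservation of feasibility and contiguity under inward snapping, should be routine.
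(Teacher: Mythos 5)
Your inward-snapping transformation is not the paper's argument, and it has a genuine gap exactly at the place you flag as ``delicate.'' Snapping each block $B_i$ to the whole strips it contains (even after averaging over grid offsets) only controls the loss for intervals whose allocation is $\Omega(\MaxW)$: a block of length $c_i$ loses up to $2s\approx(\eps/2)\MaxW$ colors, which is an $\eps$-fraction of $c_i$ only when $c_i=\Omega(\MaxW)$. For all smaller allocations your plan rests on two unsupported steps. First, you invoke the claim that an optimal uniform solution can be assumed to allocate every interval either $\MaxW$ or $W\bmod\MaxW$ colors. The paper proves this only for the \emph{non-contiguous} problem ({\fbap}), as a property of the Paging algorithm; it is not established for {\fsap}, where contiguity is exactly the binding constraint (the {\fsap} optimum can be strictly below the {\fbap} optimum, which is why the problem is NP-hard), and it is an optimal {\fsap} solution that you are transforming. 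Second, even granting that structure, the remainder value $r=W\bmod\MaxW$ can be as large as $\MaxW-1\gg s$, so ``promoting each small or remainder interval to one whole strip'' is in fact a demotion from $r$ colors to $s=\lfloor\eps\MaxW/4\rfloor$ colors --- an unbounded fractional loss that cannot be charged to $\eps\cdot OPT_{\fsapss}({\cJ})$ --- while rounding such intervals \emph{up} to whole strips threatens capacity; the bookkeeping you defer (``packed into the strips left free'') is precisely what is missing, and there may be no free strips to use.

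The paper avoids per-interval preservation altogether, and that is the idea your proposal lacks. It repacks globally: for each strip $j$ it computes a maximum independent set ${\cI}_j$ among the intervals whose original colors touch that strip, and assigns every interval of ${\cI}_j$ the entire strip. Since each color, at each point in time, is used by at most one interval, the profit the original solution extracts from strip $j$ is at most $|C_j|\cdot|{\cI}_j|$; hence after this reassignment the total profit is at least $OPT_{\fsapss}({\cJ})$ --- small intervals may lose all their colors, but their profit is recaptured by the independent sets, so no per-interval charging is needed. Contiguity of the new coloring follows from the observation that an interval whose original block meets strips $j$ and $j+2$ fully owns strip $j+1$ (by contiguity of the original coloring) and therefore belongs to every maximum independent set of that strip; finally, allocations that now exceed $\MaxW$ are trimmed by at most two strips, i.e.\ at most $(\eps/2)\MaxW$, which is where the $(1-\eps)$ factor comes from. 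To rescue your route you would have to prove the $\{0,\,W\bmod\MaxW,\,\MaxW\}$ structure for contiguous optima and then solve the packing bookkeeping; the paper's independent-set exchange makes both unnecessary.
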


\begin{proof}
W.l.o.g., we may assume that $\lfloor \eps \MaxW/4 \rfloor \geq 1$, else we have that $W$ is a constant,
and we can solve {\fsap} optimally in polynomial time (see \cite{SWZ13}).
Given an optimal solution for a uniform input ${\cJ}$, let $S$ be the subset of intervals $J_i$ for which
$|c(J_i)| >0$, and let $G_S$ be the support graph for $S$ (i.e., the
subgraph of the original interval graph induced by the intervals in
$S$). Using the above partition of the colors to strips, we have
$1 \leq N \leq \lceil \frac{4W}{\eps \MaxW} \rceil$ strips. Denote by $C_j$ the subset of colors of strip $j$.
We obtain the {\em strip structure} for the solution as follows.
Let $S_j \subseteq S$ be the subset of intervals with colors in strip $j$, i.e.,
$S_j = \{ J_i| c(J_i) \cap C_j \neq \emptyset\}$. Initialize for all $J_i \in S$, $c'(J_i)=\emptyset$.
$(i)$ For all $1 \leq j \leq N$, find in $G_S$ a maximum independent set, ${\cI}_j$ of
intervals $J_i \in S_j$. For any $J_i \in  {\cI}_j$, assign to $J_i$ all colors in $C_j$, i.e.,
$c'(J_i)=c'(J_i) \cup C_j$.
$(ii)$ For any $J_i \in S$, if  $|c'(J_i)| > \MaxW$ then omit from the coloring of $J_i$
a consecutive subset of strips, starting from the highest $1 \leq j \leq N$, such that $C_j \subseteq c'(J_i)$,
until for the first time $|c'(J_i)| \leq \MaxW$.

We show below that the above {\em strip coloring} for $S$ is feasible, and that the total profit from the
strip coloring is at least $(1-\eps){OPT}_{\fsapss}(J)$.
To show feasibility, note that if $J_i \in S_j$ and $J_i \in S_{j+2}$
then because the coloring is contiguous $J_i$ is allocated all the
colors in $S_{j+1}$,
and thus any maximum independent set ${\cI}_{j+1}$ will contain $J_i$ since it
has no conflicts with other jobs in $S_{j+1}$. It follows that if a job $J_i$ is
allocated colors in more than two consecutive strips, i.e.,
$J_i \in S_j \cap S_{j+1} \cap \cdots \cap S_{j+\ell}$, for $\ell > 1$,
then $J_i \in {\cI}_{j+1} \cap \cdots \cap {\cI}_{j+\ell-1}$.
Thus, each interval $J_i \in S$ will be assigned in step $(i)$ a consecutive set of strips. Hence $c'$ is a contiguous coloring.
In addition, after step $(ii)$, for all $J_i \in S$ we have that
$|c'(J_i)| \leq \MaxW$.

Now, consider the profit of the strip coloring. We note that after step $(i)$, the total profit of $c'$ is at least
${OPT}_{\fsapss}(J)$. This is because for each strip $j$,
$|C_j|\cdot |{\cI}_j|$ is an upper bound on the profit that can be
obtained from this strip. We show that the harm of reducing the number of colors in step $(ii)$ is small.
We distinguish between two type of intervals in $S$.
\begin{enumerate}
\item[(a)]
Intervals $J_i$ for which $|c(J_i)| \geq (1-\eps/2)\MaxW$.
Since coloring $c$ is valid it follows that
$|c'(J_i)|$ is reduced in step $(ii)$  only if before this step
$|c'(J_i)| > |c(J_i)|$.
\comment{
Suppose that $\{ C_\ell, \cup \cdot \cdot \cdot \cup C_{\ell+r-1} \} \subseteq c(J_i)$, for
some $\ell$, and $r \geq 1$, i.e., $J_i$ was assigned all the colors in $r$ consecutive strips, and possibly some colors
in $C_{\ell -1}$, and some colors in  $C_{\ell +r}$. Thus, after step $(ii)$, we may have that
$c'(J_i) \cap \{ C_{\ell+r-1} \cup C_{\ell +r} \} = \emptyset$.
}
Consider all the strips that contain colors assigned to $J_i$ in the
original coloring $c$. Note that in all such strips except at most two
no colors are assigned to any interval that intersects $J_i$.
Thus $|c'(J_i)|$ is reduced in step $(ii)$  by at most
$2 \lfloor \frac{\eps \MaxW}{4} \rfloor \leq (\eps/2) \MaxW$.
Since $|c(J_i)| \geq (1- \eps/2)\MaxW$, we have that after step
$(ii)$, $|c'(J_i)| \geq |c(J_i)|- (\eps/2) \MaxW \geq (1-\eps) \MaxW$.
\item[(b)]
For any interval $J_i$ for which $|c(J_i)| <
(1-\eps/2)\MaxW$, since after
step $(i)$  $|c'(J_i)| \leq |c(J_i)|+2 \lfloor \frac{\eps \MaxW}{4} \rfloor$, we have that
$|c'(J_i)| \leq \MaxW$. Thus, no colors are omitted from $J_i$ in step $(ii)$.
\end{enumerate}
From $(a)$ and $(b)$, we have that the total profit from the strip coloring satisfies
${OPT'}_{\fsapss}({\cJ}) \geq (1-\eps) {OPT}_{\fsapss}({\cJ})$.
\hfill \eod
\end{proof}

Now, consider the case where $\MaxW$ is {\em small}, i.e.,
$k= \lceil W/\MaxW \rceil > 1/\eps$. In this case we consider just
$(k-1)\MaxW$ consecutive colors and ignore the remainder up to
$\eps W$ colors. Recall that when the number of colors is a multiple
of $\MaxW$ we can find an optimal solution.
Let $OPT_{\fsapss(W)}({\cJ})$ denote the
value of an optimal solution for instance $\cJ$ of {\fsap} with $W$
colors, and recall that $k =\lceil W/\MaxW \rceil$.
Since $(k-1)\MaxW < W < k\MaxW$,
$OPT_{\fsapss((k-1)\MaxW)}({\cJ})<OPT_{\fsapss(W)}({\cJ})<OPT_{\fsapss(k\MaxW)}({\cJ})$.
The value of $OPT_{\fsapss((k-1)\MaxW)}$  is $\MaxW$ times the size of
the max $(k-1)$-colorable subgraph of $G$, and
the value of $OPT_{\fsapss(k\MaxW)}$  is $\MaxW$ times the size of
the max $k$-colorable subgraph of $G$. Clearly, the ratio of the sizes
of these subgraphs and thus the ratio of the two optimal values is
bounded by $1 - 1/k>1-\eps$. It follows that
$OPT_{\fsapss((k-1)\MaxW)}({\cJ}) \ge
(1-\eps)OPT_{\fsapss(k\MaxW)}({\cJ}) > (1-\eps)OPT_{\fsapss(W)}({\cJ})$.
Combining the results, we have
\begin{theorem}
The above algorithm is a PTAS for uniform {\fsap} instances.
\end{theorem}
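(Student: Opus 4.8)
The plan is to verify that the two-regime algorithm described above returns, for every fixed $\eps>0$, a feasible coloring of profit at least $(1-\eps)OPT_{\fsapss}(\cJ)$ in time polynomial in $n$; since $\frac{1}{1-\eps}=1+O(\eps)$, rescaling $\eps$ then turns this into a $(1+\eps)$-approximation, i.e.\ a PTAS. The algorithm computes $k=\ceil{W/\MaxW}$ and splits into the two cases analyzed above, so it suffices to certify the guarantee in each and to observe that the cases are exhaustive and mutually exclusive.

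First I would treat the large-$\MaxW$ regime $k\le 1/\eps$. Here Lemma~\ref{lemma:MaxLarge_structure} supplies a feasible, strip-structured solution of profit at least $(1-\eps)OPT_{\fsapss}(\cJ)$, in which every colored interval is assigned a whole number of the $N=O(1/\eps^2)$ fixed strips. Because each colored interval occupies at least one strip and the strips assigned to simultaneously active intervals are disjoint, at most $N$ intervals carry color at any time, so any strip-structured solution has load bounded by the constant $N$. I would then run the bounded-load dynamic program of \cite{SWZ13}, scanning left endpoints left to right and recording at each cut which strips are occupied and by which active interval; optimizing over these $n^{O(1/\eps^2)}$ states returns the best strip-structured solution, which is at least $(1-\eps)OPT_{\fsapss}(\cJ)$.

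Next I would treat the small-$\MaxW$ regime $k>1/\eps$, where the idea is simply to discard the fewer than $\eps W$ colors lying above $(k-1)\MaxW$. The truncated palette has size $(k-1)\MaxW$, an exact multiple of $\MaxW$, so by Lemma~\ref{thm:uniform_fsap_kint} the restricted instance is solved \emph{optimally} in polynomial time by computing a maximum $(k-1)$-colorable subgraph of the interval graph $G$ and assigning $\MaxW$ contiguous colors to each selected interval. The loss is controlled exactly as in the discussion preceding the theorem: the maximum $(k-1)$- and $k$-colorable subgraphs of $G$ differ in size by a factor at least $1-1/k>1-\eps$, whence $OPT_{\fsapss((k-1)\MaxW)}(\cJ)\ge(1-\eps)OPT_{\fsapss(W)}(\cJ)$.

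Combining the two cases yields, in polynomial time, a feasible solution of profit at least $(1-\eps)OPT_{\fsapss}(\cJ)$, as required. I expect the only real obstacle to lie in the large-$\MaxW$ case: one must confirm that the strip partition of Lemma~\ref{lemma:MaxLarge_structure} genuinely caps the load by the constant $N$, and that the \cite{SWZ13} dynamic program can be driven over strip-blocks rather than arbitrary color counts while keeping the state space at $n^{O(1/\eps^2)}$. The small-$\MaxW$ case is routine once Lemma~\ref{thm:uniform_fsap_kint} and the colorable-subgraph ratio bound are in hand.
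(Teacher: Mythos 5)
Your proposal is correct and takes essentially the same route as the paper: the same split into the regimes $k\le 1/\eps$ (strip-structured solutions via Lemma~\ref{lemma:MaxLarge_structure}, found by the bounded-load dynamic program of \cite{SWZ13}) and $k>1/\eps$ (truncation to $(k-1)\MaxW$ colors, solved exactly via Lemma~\ref{thm:uniform_fsap_kint}, with the loss controlled by the $1-1/k$ bound on the ratio of maximum $(k-1)$- and $k$-colorable subgraphs). The details you flag as potential obstacles—that the strip structure caps the load at $N=O(1/\eps^2)$ simultaneously active intervals, and that rescaling $\eps$ converts the $(1-\eps)$ profit guarantee into a $(1+\eps)$ approximation ratio—are exactly the points the paper leaves implicit when it says ``Combining the results,'' and your reasoning for them is sound.
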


\myparagraph{Acknowledgments}
We thank Magn\'{u}s \ Halld\'{o}rsson and Viswanath Nagarajan for valuable discussions.
\bibliographystyle{abbrv}
\bibliography{fsap}

\normalsize
\def\baselinestretch{1}
\appendix
\comment{
\section{Applications of {\fsap} and {\fba}}

Both {\fsap} and the {\fba} problem
naturally arise in scenarios where activities require (contiguous or non-contiguous)
 {\em flexible} amount of a resource.
A computer program
may require a contiguous range of storage space (e.g., memory allocation) for a specific time interval.
The allocated amount may be flexible, and will affect the execution time of the program.
A connection request over a network may require certain amount of (non-contiguous) bandwidth, or
a contiguous set of frequencies. The allocated amount may be flexible and depend on the
desired quality-of-service for the connecting client. The resource
may also be the servers in a Cloud. The number of servers allocated to a job can be flexible, depending
on the required processing time for this job.

}
\comment{
\section{Applications}
We mention below two primary applications of our scheduling problems.
\myparagraph{Scheduling Time-Sensitive Jobs on Large Computing Clusters}
In the cloud computing paradigm, jobs that require cloud services are often time-critical (see, e.g. \cite{JMNY11,MH11}).
Thus, each job is associated with arrival time, a due date,
a maximum resource requirement, and the cost paid per allocated unit of the resource. Given a large computing cluster
with a total available resource $W$ (e.g. bandwidth, or storage capacity),
consider job instances with strict timing constraints, where each job has to start processing upon arrival.
The scheduler needs to schedule a feasible subset of the jobs, and assign to each some amount
of the resource, so as to maximize the total profit.
When jobs require a {\em contiguous} allocation of the resource, we have an instance of {\fsap}. When allocation
may be non-contiguous, we have an instance of {\fba}.

\myparagraph{Spectrum Allocation for Flexgrid Optical Networks}
In all-optical networks, several  high-speed signals connecting
different source-destination pairs may share a link, provided they are
transmitted on carriers having different wavelengths of light (see, e.g., in \cite{RSS09}).
Traditionally, the spectrum of light
that can be transmitted through the fiber has been divided into frequency intervals
of fixed width with a gap of unused frequencies between them.
In the emerging {\em flexgrid} technology (see, e.g., \cite{JT+09,G10}), the usable
frequency intervals are of variable width. As in the traditional model,
two different
signals using the same link have to be assigned disjoint sub-spectra.
Thus, given a set of connection requests in a path network, each associated with
a profit per allocated spectrum unit, in {\fsap} we need to feasibly allocate to the requests
contiguous frequency intervals, with the goal of maximizing the total profit. {\fba} corresponds to the model where
the sub-spectra allocated to each request need not be contiguous.
}

\section{Algorithm Paging\_{\fba}}

Algorithm~\ref{alg:opt_fba} gives the pseudocode of Paging\_{\fba}.
\begin{algorithm}[!h]
\caption{Paging\_{\fba}$({\cJ}, {\bar r_{min}}, {\bar r_{max}}, W)$ \label{alg:opt_fba}}
  \begin{algorithmic}[1]
    \STATE $\SOL=0$.
    \STATE $\AVAIL=[1..W]$.
    \STATE W.l.o.g. assume that all endpoints are distinct.
    Sort the left endpoints of the intervals in ${\cJ}$ in
    non-decreasing order. Let $L$ denote the sorted list.
    \WHILE{not end-of-list}
        \STATE Consider the next endpoint $L$.
        \IF{the endpoint is $s_i$, the left endpoint of $J_i$,}
            \STATE $\SOL += r_{min}(i)$.
            \STATE $A_i=\emptyset$.
            \IF{$|\AVAIL|>r_{min}(i)$}
                \STATE Move $r_{min}(i)$ colors from $\AVAIL$ to $A_i$.
            \ELSE
                \STATE $A_i = \AVAIL$
                \STATE $\AVAIL=\emptyset$.
                \WHILE{$|A_i|<r_{min}(i)$}
                    \STATE Among all intervals such that $s_k<s_i$ and $|A_k|>r_{min}(k)$,
                    let $J_k$ be the interval with maximum right endpoint $e_k$.
                    \STATE Move $\min\{|A_k|-r_{min}(k),r_{min}(i)-|A_i|\}$ colors from $A_k$ to $A_i$.
                \ENDWHILE
            \ENDIF
            \STATE Move $\min\{|\AVAIL|,r_{max}(i)-r_{min}(i)\}$ colors from $\AVAIL$ to $A_i$.
            \STATE $\SOL += \min\{|\AVAIL|,r_{max}(i)-r_{min}(i)\}$.
            \STATE Let ${\cJ}_i = \{J_k \in \cJ | s_k < s_i \wedge e_k > e_i \wedge |A_k|>r_{min}(k)\}$.
            \WHILE {$|A_i|<r_{max}(i) \wedge |\cJ_i|>0$}
                \STATE Let $J_k$ be the interval with the maximum right endpoint in $\cJ_i$.
                \STATE Move $\min\{|A_k|-r_{min}(k),r_{max}(i)-|A_i|\}$ colors from $A_k$ to $A_i$
            \ENDWHILE
            \ELSE
            \STATE  \COMMENT{The endpoint is $e_i$, the right endpoint of $J_i$}
            \STATE $\AVAIL = \AVAIL \cup A_i$
        \ENDIF
    \ENDWHILE
    \STATE Return $\SOL$ and the sets $A_1, \ldots A_n$ of colors assigned to the jobs in ${\cJ}$.
   \end{algorithmic}
\end{algorithm}

\section{Hardness of {\fsap}  for Uniform Instances}

\noindent
{\bf Proof of Theorem \ref{thm:hard_uniform_fsap}: }
The reduction is from of the 3-Exact-Cover (3XC) problem defined as
follows. Given a universal set $U=\{e_1,\ldots,e_{3n}\}$ and a
collection
$S=\{S_1,\ldots,S_m\}$ of 3 element subsets of $U$, is there a
sub-collection $S' \subseteq S$ such that each element of $U$ occurs
in exactly one member of $S'$. Note that $|S'|=n$, if such exists.
Recall that Karp showed in his seminal paper~\cite{K72} that 3XC is
NP-Complete.

To simplify the presentation, we first assume that the intervals have
different profits per allocated unit.
For the reduction, we use several sets of intervals. One such set is shown
in Fig.~\ref{fig:FSASetJobs}. It consists of 16 intervals whose lengths and
relative positions are given in the figure. Assume that the profit of
each of the
intervals $9,\ldots,16$ is higher than the sum of the profits
of intervals $1,\ldots,8$, and that
the profit of each of the intervals $2$ to $7$ is higher than the
profit of intervals $1$ and $8$.

\begin{figure}[h]
\begin{center}
\includegraphics[scale=.75]{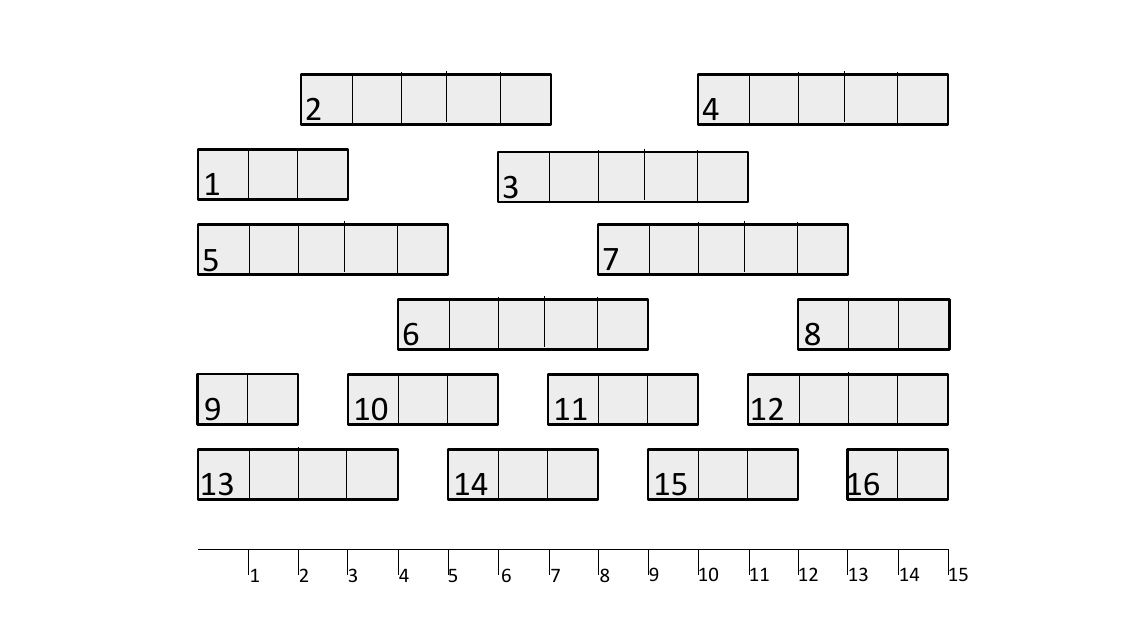}
\caption{The ``two-choice" set of intervals}\label{fig:FSASetJobs}
\end{center}
\end{figure}

Suppose that we are given two ``banks" of contiguous colors to allocate to this
set of intervals: one bank consists of four contiguous colors and one
consists of three contiguous colors. Given that $r_{max}(i)=3$, our
first priority is to allocate three colors to each interval in
$[9..16]$.
Assuming that three colors are indeed allocated to each interval in
$[9..16]$,
note that any other interval can be allocated at most one color. This
is since
any other interval intersects two intervals from $[9..16]$ at
a point.

We say that two intervals from $[1..8]$ {\em conflict}
if both cannot be allocated colors simultaneously. Note that two
such intervals conflict if both intersect two intervals from
$[9..16]$ at the same time point because only 7 colors are available. Define the conflict graph to
be a graph over the vertices $[1..8]$, where two vertices
are connected if the respective intervals conflict. It is easy to see
that the conflict graph is the path \mbox{1 -- 5 -- 2 -- 6 -- 3 -- 7 -- 4 -- 8}.
Since the profit of intervals $[2..7]$ is higher than the profit of
intervals $1$ and $8$, the best strategy is to allocate one color to
three of the intervals in $[2..7]$ and one color to either interval $1$
or $8$. The only two possibilities for allocating the two banks of colors
in such a way are illustrated in Fig.~\ref{fig:FSASetJobsAOpt12}.
Because of this property, we call this set of intervals a ``two-choice"
gadget.
Note that, in the first option, the bank of 3 colors is unassigned
at times: $5$, $9$ and $13$, while in the second option, it is
unassigned at times $3$, $7$ and $11$.

\begin{figure}[h]
\begin{center}
\includegraphics[scale=.675]{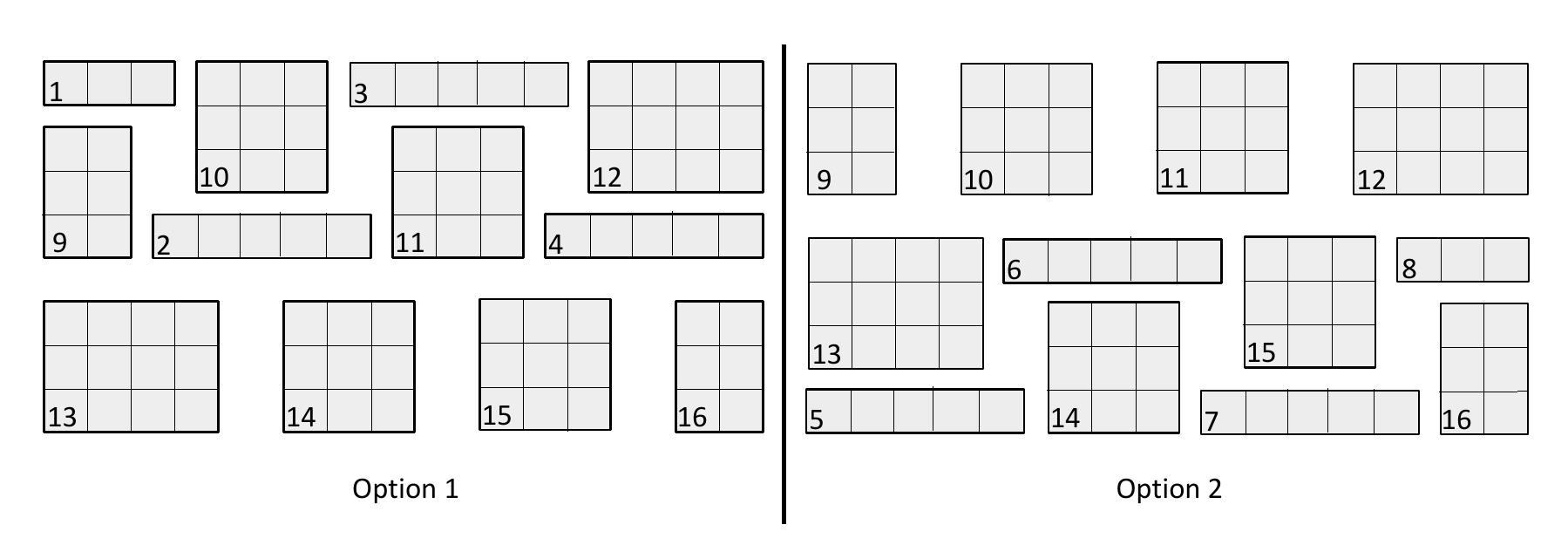}
\caption{The set of intervals}\label{fig:FSASetJobsAOpt12}
\end{center}
\end{figure}

We need to define also a pair of intervals called an ``overlapping"
pair of intervals, illustrated in  Fig.~\ref{fig:FSAOverlapJobs}.
Note that to be able to allocate 3 colors to both intervals, we need one
bank of 3 contiguous colors at time interval $[t_1,t_2+1)$ and another
at time interval $[t_2,t_3)$; that is, we need both banks at time
interval $[t_2,t_2+1)$.

\begin{figure}[h]
\begin{center}
\includegraphics[scale=.75]{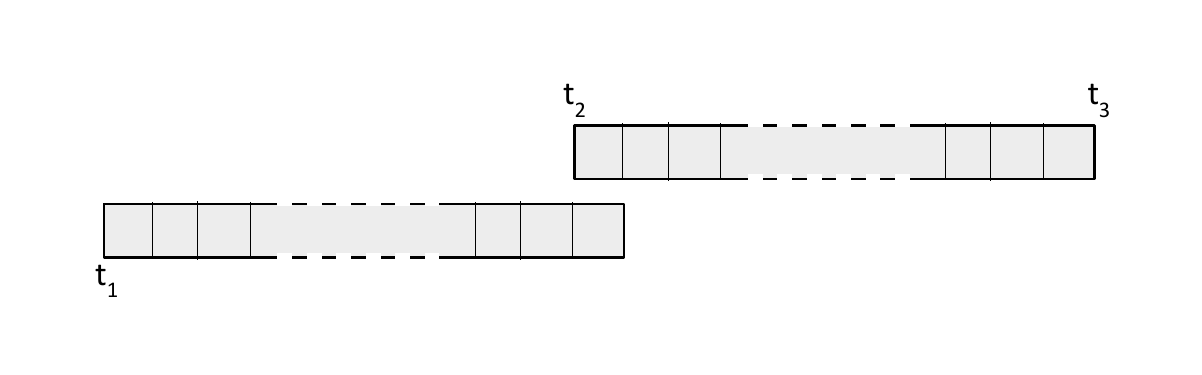}
\caption{Overlapping intervals}\label{fig:FSAOverlapJobs}
\end{center}
\end{figure}

We now describe the reduction from the 3XC problem. For each set
$S_i \in S$, associate a ``two-choice"
gadget. The ``two choices" correspond
to the decision whether to include $S_i$ in the cover or not. For each
element $e \in U$, and for each set $S_i$ such that $e \in S_i$, we
associate a pair of overlapping intervals, where the overlap is in one
of the times in which the ``two choice" gadget corresponding to set
$S_i$ has an unassigned bank of 3 colors. In addition, we define some
extra intervals as detailed below.

Given a 3XC problem instance, set $W=9m+7$ and $r_{max}=3$ for all
intervals. Let $P=8n+45m$. In the reduction, we have 5 groups of intervals defined as
follows.
\begin{enumerate}
\item Left border: $3m+3$ intervals $L_1,\ldots , L_{3m+3}$ each of
    profit $P^2$.
    For $i \in [1..3n]$, $L_i= [0, i)$,
    for $i \in [3n+1..3m]$, $L_i = [0,4n)$, and
    for $i \in [3m+1..3m+3]$, $L_i = [0,4n+15m)$.
\item Right border: $3m+3$ intervals $R_1,\ldots R_{3m+3}$ each of
    profit $P^2$.
    For $i \in [1..3n]$, $R_i= [5n+45m+i, 8n+45m+1)$,
    for $i \in [3n+1..3m]$, $R_i = [4n+45m, 8n+45m+1)$, and
    for $i \in [3m+1..3m+3]$, $R_i = [4n+30m, 8n+45m+1)$.
\item ``two choice" gadgets: $m$ copies of the ``two choice" gadget,
    one for each set $S_i \in S$. The gadget associated with set $S_i$
    starts at time $4n+15m+15(i-1)$ and its length is 15 time units. The
    profit of intervals $1$ and $8$ in each copy is $1$, the profit of
    intervals $2$ to $7$ is $2$, and the profit of intervals $9$ to
    $16$ is $P^2$.
\item element overlapping pairs: $3m$ overlapping pairs of intervals,
    one per occurrence of an element in a set.
    For each $S_i \in S$, let $S_i = \{e_a,e_b,e_c\}$, where
    $\{a,b,c\} \subseteq [1..3n]$. The respective three overlapping pairs
    are (1) $[a,4n+15m+15(i-1)+3)$ and $[4n+15m+15(i-1)+2,5n+45m+a)$,
    (2) $[b,4n+15m+15(i-1)+7)$ and $[4n+15m+15(i-1)+6,5n+45m+b)$, and
    (3) $[c,4n+15m+15(i-1)+11)$ and $[4n+15m+15(i-1)+10,5n+45m+c)$.
    The profit of each such interval is its length. Note that the
    profit of any pair of overlapping intervals is $5n+45m+1$.
\item ``filler" overlapping pairs: $3m$ overlapping pairs, three per
    set. For each $S_i \in S$, the respective three overlapping pairs
    are (1) $[4n,4n+15m+15(i-1)+5)$ and $[4n+15m+15(i-1)+4,4n+45m)$,
    (2) $[4n,4n+15m+15(i-1)+9)$ and $[4n+15m+15(i-1)+8,4n+45m)$, and
    (3) $[4n,4n+15m+15(i-1)+13)$ and $[4n+15m+15(i-1)+12,4n+45m)$.
    The profit of each such interval is its length. Note that the
    profit of any pair of overlapping intervals is $45m+1$.
\end{enumerate}

\begin{mylemma}
    \label{lem:3XC_red}
    The 3XC instance has an exact cover if and only if the associated
    {\fsap} instance  has profit
    $(18m+14+24m)P^2+7m+9n(5n+45m+1)+(9m-9n)(45m+1)=
    (42m+14)P^2+405m^2+45n^2+16m$.
\end{mylemma}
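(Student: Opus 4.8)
The plan is to prove the two implications by analyzing the structure of any contiguous coloring whose profit reaches the target value, exploiting that $P=8n+45m$ is chosen so large that the profit-$P^2$ intervals dominate every other decision. I would first show that the $P^2$-contribution is \emph{forced}. The profit obtainable from all remaining intervals (the gadget intervals $1$--$8$ and the two families of overlapping pairs) is at most $7m+9m(5n+45m+1)+9m(45m+1)$, which, using $n\le m$, is strictly smaller than $P^2=(8n+45m)^2$; hence a target coloring must use the maximum possible number of colour-units on $P^2$-intervals. That maximum is $42m+14$: at $t=0^+$ only the $3m+3$ left-border intervals are active and jointly demand $9m+9>W=9m+7$ colours, forcing a loss of $2$ units, symmetrically the right border loses $2$ near $t=(8n+45m+1)^-$, while the $8m$ gadget intervals $9$--$16$ can each receive $3$ colours. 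Thus the $P^2$-profit of a target coloring is exactly $(42m+14)P^2$ and every gadget interval $9$--$16$ gets $3$ colours.

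Second, I would invoke the gadget structure: once intervals $9$--$16$ each get $3$ colours, the two free colour banks together with the conflict path $1$--$5$--$2$--$6$--$3$--$7$--$4$--$8$ force a gadget's low-profit yield to be at most $7$, attained only in the ``filler'' option (a bank of three is free at local times $5,9,13$) or the ``element'' option (free at $3,7,11$). An overlapping pair is fully served (both intervals coloured with $3$ colours) only if its gadget frees a bank at the pair's overlap time, a served element pair then contributes $3(5n+45m+1)$ and a served filler pair $3(45m+1)$. Writing $E,F$ for the numbers of fully served element and filler pairs, the pair-profit is $3(5n+45m+1)E+3(45m+1)F$; since $0<5n\le 5m<45m+1$, a short gap/divisibility argument shows the only nonnegative-integer value of this quantity equal to the target pair-profit, subject to $E\le 3n$ and $E+F\le 3m$ (each gadget is in exactly one option and serves at most three of its own pairs), is $E=3n$, $F=3(m-n)$.

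The heart of the proof, and the step I expect to be hardest, is showing that $E=3n$ together with maximality of the $P^2$-profit forces the $n$ element-mode gadgets to cover each element exactly once, i.e.\ an exact cover. For each element $e_a$ let $\mu_a$ be the number of fully served element pairs using $e_a$, so $\sum_a\mu_a=E$. The element-pair first intervals for $e_a$ begin at time $a$, exactly when the left border $L_a=[0,a)$ releases its colours; the budget at $t=a^+$ gives $3\sum_{b\le a}\mu_b\le\sum_{i\le a}c_i\le 3a$, where $c_i\le 3$ are the colours of $L_i$, hence $\sum_{b\le a}\mu_b\le a$. Symmetrically, the element-pair second intervals for $e_a$ end at $5n+45m+a$, where the right border with that left endpoint begins; maximality of the $P^2$-profit forces the mid and late right borders to be full (the $2$-unit deficit must sit on the three persistent right borders, else $E<3n$), and the colour count just past $5n+45m+k$ gives $k+\sum_{a>k}\mu_a\le 3n$, i.e.\ $\sum_{a\le k}\mu_a\ge k$. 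The two prefix bounds yield $\sum_{a\le k}\mu_a=k$ for all $k$, whence $\mu_a=1$ for every $a$: each element is served exactly once, so the $n$ element-mode sets form an exact cover.

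Conversely, given an exact cover $S'$ I would place each $S_i\in S'$ in the element option and each $S_i\notin S'$ in the filler option, colour all $P^2$-intervals to their forced values (putting the two left and two right deficits on the three longest-lived borders on each side), and fully serve the $3n$ element pairs and $3(m-n)$ filler pairs. Because the cover is exact, $\sum_{b\le a}\mu_b=a$, so the left hand-off (borders $L_a$ releasing colours to element-pair firsts) and the right hand-off (element-pair seconds releasing colours to right borders) fit exactly into $W$ colours, and the total profit equals $(42m+14)P^2+7m+9n(5n+45m+1)+(9m-9n)(45m+1)$, which is the stated value. The main obstacle throughout is the region-by-region feasibility bookkeeping---across the gadget band $[4n+15m,4n+30m)$, the filler tail ending at $4n+45m$, and the element tail up to $8n+45m$---verifying that no intermediate time exceeds $W$; it is precisely the coupling of the left inequality $\sum_{b\le a}\mu_b\le a$ with the right inequality $\sum_{a\le k}\mu_a\ge k$ that encodes the ``exactly once'' requirement of 3XC.
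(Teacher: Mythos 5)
Your skeleton matches the paper's proof in its first two stages: the forced $(42m+14)P^2$ term (your counting of $9m+7$ units on each border family plus $24m$ on the gadget intervals is exactly the paper's), the two-option gadget structure worth at most $7$ per gadget, and the target arithmetic identifying $3n$ element pairs and $3(m-n)$ filler pairs. Your final stage, however, is genuinely different from the paper's: you force the exact cover by prefix counting (left-border budget giving $\sum_{b\le a}\mu_b\le a$, right-border budget giving $\sum_{a\le k}\mu_a\ge k$, hence $\mu_a=1$ for every element), whereas the paper concludes by matching the served left and right intervals into overlapping pairs via sorted sequences and then counting gadget options. Your prefix argument is a nice, explicit way to rule out double coverage of an element, a point the paper treats only tersely.

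The genuine gap sits one step earlier, and it is exactly the step where the paper invests its main technical effort. You write the pair profit as $3(5n+45m+1)E+3(45m+1)F$, i.e., you assume all profit from the overlapping-pair intervals comes from \emph{fully served pairs} (both members get $3$ colours, or the pair contributes nothing). Nothing forces this a priori: a left interval can be given $3$ colours while its partner gets none, and in particular a left interval aimed at a \emph{late} gadget can be served together with a right interval emanating from an \emph{early} gadget. Such a ``crossed'' combination has total length
\[
(5n+45m+1)+15(i-j)+(b-a) \;>\; 5n+45m+1 ,
\]
strictly larger than what a proper pair contributes, so your arithmetic uniqueness argument over $(E,F)$ subject to $E\le 3n$, $E+F\le 3m$ cannot exclude configurations built from unmatched lefts and rights (nor from intervals given only $1$ or $2$ colours). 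Excluding them requires a feasibility argument, and this is precisely what the paper's Claim about the sequences $O^L_1,\dots,O^L_{3m}$ and $O^R_1,\dots,O^R_{3m}$ supplies: sorting served lefts by increasing and served rights by decreasing length, it shows by a colour count (a crossed overlap must contain a time unit where a gadget's $P^2$-intervals occupy two banks, which would force $3m+1$ pair intervals plus two gadget intervals, exceeding the $9m+7$ available colours) that matched intervals overlap by at most one time unit, hence are genuine overlapping pairs. Your proposal has no substitute for this step, so the deduction $E=3n$, $F=3(m-n)$ --- on which your entire prefix argument and the converse direction's accounting rest --- is unjustified as written.
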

\begin{proof}
    Assume that the 3XC instance has an exact cover. We show an
    assignment of the intervals in the {\fsap} instance that achieves
    the desired profit.
    First, assign color $1$ to $L_{3m+2}$ and $R_{3m+2}$ and colors
    $2,3,4$ to $L_{3m+3}$ and $R_{3m+3}$. Also assign colors $5,6,7$
    to $L_{3m+1}$ and colors $9m+5,9m+6,9m+7$ to $R_{3m+1}$.
    Now, consider $S_i$, for $i \in [1..m]$. Assume that $0\le k <i$
    sets $S_j$, for $j<i$ are in the cover. (See also
    Fig.~\ref{fig:FSASetJobsAOpt12a}.)

\begin{figure}[h]
\begin{center}
\includegraphics[scale=.615]{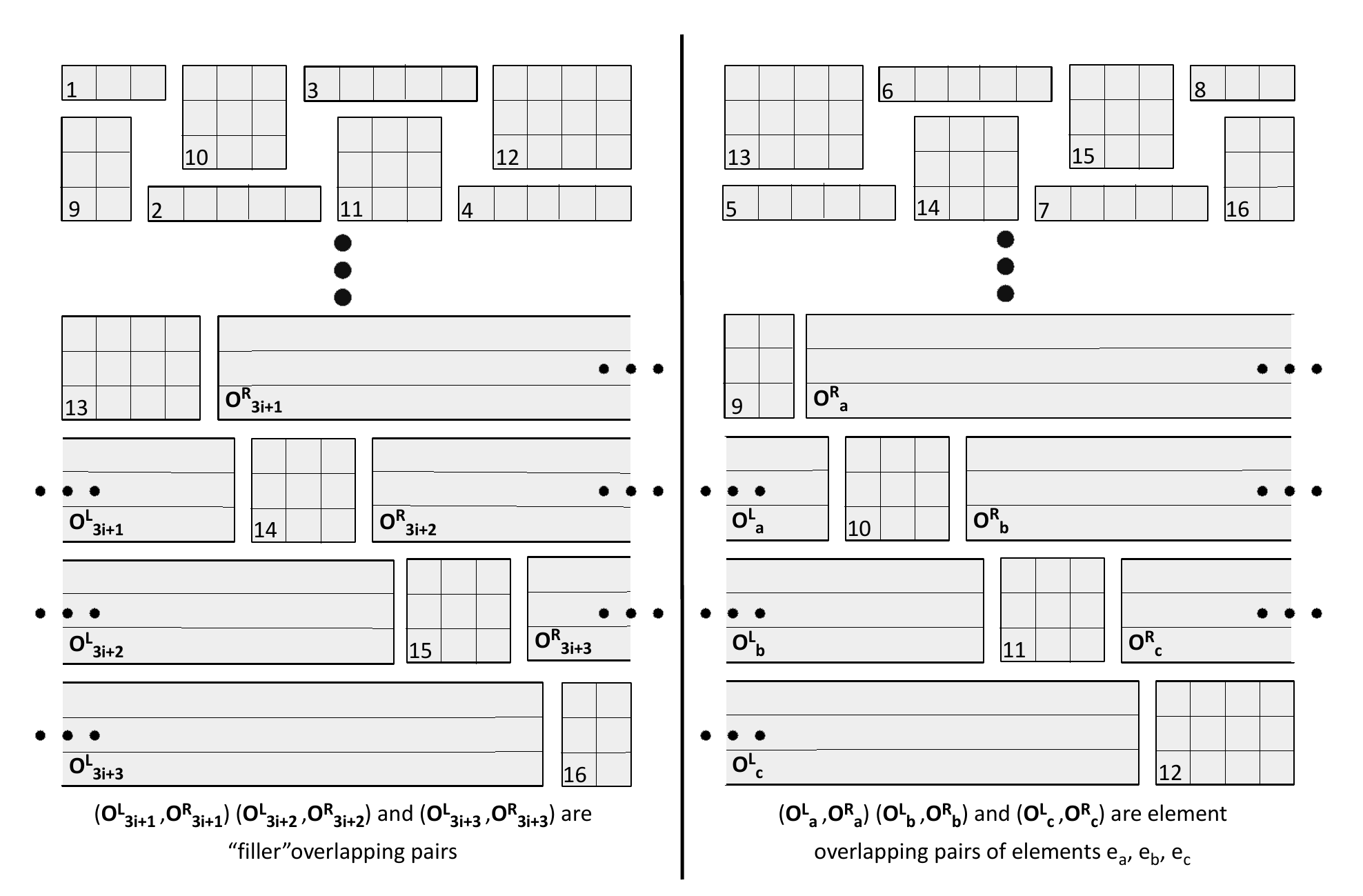}
\caption{Assigning intervals related to $S_i$}\label{fig:FSASetJobsAOpt12a}
\end{center}
\end{figure}

    If $S_i$ is not in the cover
    choose the first option for the ``two choice" gadget associated
    with $S_i$. Namely, assign colors $[1..4]$ to intervals $[1..4]$
    and $[9..12]$ in the gadget, and for $j \in [13..16]$ assign colors
    $9i+3j-43,9i+3j-42,9i+3j-41$ to interval $j$.
    Also, assign colors $9i-4,9i-3,9i-2$ to $R_{3(n+i-k)-2}$;
    for $j \in \{2,3\}$, assign colors $9i+3j-7,9i+3j-6,9i+3j-5$
    to $L_{3(n+i-k)+j-4}$ and $R_{3(n+i-k)+j-3}$,
    and assign colors $9i+5,9i+6,9i+7$ to $L_{3(n+i-k)}$.
    Finally, assign 3 colors to the 3 ``filler" overlapping pairs
    corresponding to $S_i$ as follows: colors $9i-4,9i-3,9i-2$ to
    interval $[4n+15m+15(i-1)+4,4n+45m)$, colors $9i-1,9i,9i+1$ to
    intervals  $[4n,4n+15m+15(i-1)+5)$  and $[4n+15m+15(i-1)+8,4n+45m)$,
    colors $9i+2,9i+3,9i+4$ to intervals
    $[4n,4n+15m+15(i-1)+9)$  and $[4n+15m+15(i-1)+12,4n+45m)$,
    and  colors
    $9i+5,9i+6,9i+7$ to interval $[4n,4n+15m+15(i-1)+13)$.

    Suppose that $S_i$ is in the cover. Let $S_i = \{e_a,e_b,e_c\}$, where
    $\{a,b,c\} \subseteq [1..3n]$.
    Choose the second option for the ``two choice" gadget associated
    with $S_i$. Namely, assign colors $[1..4]$ to intervals $[5..8]$
    and $[13..16]$ in the gadget, and for $j \in [9..12]$ assign colors
    $9i+3j-31,9i+3j-30,9i+3j-29$ to interval $j$.
    The respective element overlapping pairs and the
    border intervals are assigned colors as follows:
    colors $9i-4,9i-3,9i-2$ to
    $[4n+15m+15(i-1)+2,5n+45m+a)$ and $R_a$,
    colors $9i-1,9i,9i+1$ to $L_a$, $[a,4n+15m+15(i-1)+3)$,
    $[4n+15m+15(i-1)+6,5n+45m+b)$ and $R_b$, colors
    $9i+2,9i+3,9i+4$ to $L_b$, $[b,4n+15m+15(i-1)+7)$,
    $[4n+15m+15(i-1)+10,5n+45m+c)$ and $R_c$,  and colors
    $9i+5,9i+6,9i+7$ to $L_c$ and $[c,4n+15m+15(i-1)+11)$.

Note that the assignment is valid, that is, no two overlapping
intervals are assigned the same color. To calculate the total profit
of the assignment note that $L_{3m+1}, L_{3m+3}, R_{3m+1}$ and
$R_{3m+3}$ are each assigned 3 colors and $L_{3m+2}, R_{3m+2}$ are
assigned a single color. This contributes
$14P^2$ to the profit. Also, intervals $[9..16]$ in all the ``two
choice" gadgets are assigned 3 colors, and either intervals $[1..4]$
or $[5..8]$ are assigned a single color. This contributes $24mP^2+7m$ to
the profit. Since we start from a cover, all the element overlapping pairs
as well as the corresponding left and right border intervals are
assigned 3 colors each.
This contributes $3\cdot 6nP^2+3 \cdot 3n(5n+45m+1)$
to the profit. Since the cover is exact, $3m-3n$ ``filler" overlapping
pairs as well as the corresponding left and right border intervals are
assigned 3 colors each. This contributes $3(6m-6n)P^2+3(3m-3n)(45m+1)$
to the profit. Overall, the profit is
$(42m+14)P^2+405m^2+45n^2+16m$.

We now prove the other direction. Suppose that we find an assignment
of colors to intervals with total profit
$(42m+14)P^2+405m^2+45n^2+16m$.
The only way to get total profit of at least $(42m+14)P^2$ is by
assigning 3
colors to all intervals with $P^2$ profit per allocated unit in the ``two choice"
gadgets, and in addition, by assigning 3 colors to all but one left
(and right) border intervals, and by assigning the 1 remaining color to the
remaining left (and right) border interval.

Consider the (element and ``filler") overlapping pairs. At most $3m$ left (and right)
intervals out of these overlapping pairs can be assigned 3 colors
each, since any additional assignment would conflict with the assignment
of colors to the border intervals.
Out of these $3m$ left and right intervals, at most $3n$ left (and right)
intervals can be element overlapping intervals.

Since all intervals with $P^2$ profit per allocated unit in the ``two choice"
gadgets are assigned 3 colors, there are $3m$ remaining 3 color blocks
throughout the interval $[4n+15m,4n+30m)$ and at most one
more block of $3$
colors is available in each of the $6m$ time units when some of
the intervals with $P^2$ profit per allocated unit in the ``two choice"
gadgets are not assigned any color (see Fig.~\ref{fig:FSASetJobsAOpt12}).
The maximum profit that can be attained by assigning these colors to
the unassigned intervals in the ``two choice"
gadgets is at most $3\cdot (2\cdot6 +2)m= 42m$. 
Thus the only way to achieve the $405m^2$ term in the profit (for
large enough $m$) is by
actually assigning 3 colors to $3m$ left and $3m$
right intervals out of the overlapping pairs.

Consider the $3m$ left intervals of the overlapping pairs that are assigned 3 colors in
increasing length order and the right intervals of the overlapping pairs
that are assigned 3 colors in decreasing length order.
Denote these two sequences by $O^L_1,\dots,O^L_{3m}$ and
$O^R_1,\dots,O^R_{3m}$.
\begin{cl}
\label{claim:unit_overlap}
For $i \in [1..3m]$, if $O^L_i$ and $O^R_i$ overlap, they cannot
overlap by more than one time unit.
\end{cl}

\begin{proof}
Suppose the claim does not hold, and let $i$ be the minimum index for
which  $O^L_i$ and $O^R_i$ overlap by more than one time unit.
However, in
this case $O^L_i$ and $O^R_i$ must overlap in at least one time unit
$t$ when
the intervals with $P^2$ profit per allocated unit in the ``two choice"
gadgets are assigned two blocks of 3 colors. Since $O^R_i$ contains
time $t$, $t$ is contained also in $O^R_1,\ldots,O^R_{i-1}$.
Similarly, since $O^L_i$ contains
time $t$, $t$ is contained also in $O^L_{i+1},\ldots,O^L_{3m}$.
But this implies that $3m+1$ intervals out of the overlapping pairs
and 2 intervals from the ``two choice" gadget are each assigned 3
colors. This is impossible, since there are only $9m+7$ colors.
\hfill \eod
\end{proof}

From the discussion above, it follows that
if $O^L_i$ and $O^R_i$ overlap they must be an
overlapping pair.
The maximum profit that can be attained from the intervals in the ``two choice"
gadgets that do not have $P^2$ profit per allocated unit is $14m$. Thus, to achieve
the additional $405m^2+45n^2$ terms in the profit, we must have that
for all $i \in [1..3m]$, intervals $O^L_i$ and $O^R_i$ are an
overlapping pair, and $3n$ out of these overlapping pairs are
element overlapping pairs.
This will contribute  $405m^2+45n^2+9m$ to the profit. The extra $7m$
profit needs to be attained by assigning colors to the intervals in the ``two choice"
gadgets that do not have $P^2$ profit per allocated unit. It follows
that each ``two
choice" gadget needs to be colored using one of the two options described
above and exactly $n$ of them have to be colored using the second
option. These $n$ gadgets correspond to the exact cover.
\hfill \eod
\end{proof}

Finally, we note how the reduction can be modified to include only
intervals of identical profit per allocated unit.
The idea is to ``slice" each interval in the original reduction to
smaller intervals whose number is the profit per allocated unit of the original interval.
When doing so, we need to ensure that it is not beneficial to
move from a ``slice" of one interval to a ``slice" of another
interval. This is done by assigning different displacements to the slices
in different intervals, so that whenever we attempt to gain from a move
from a ``slice" of one interval to a ``slice" of another
interval, we lose at least one slice due to the different
displacements. Thus, the same set of colors will be used for all slices associated with the original interval.
This completes the proof of the theorem.
\hfill \eod

\section{Algorithm ${\acn}$}

The pseudocode of ${\acn}$ is given in Algorithm \ref{alg:color_narrow}.

\begin{algorithm}[!h]
\caption{${\acn}({\cJ}, {\bar r_{max}}, {\bar p},W, \eps, P))$
 \label{alg:color_narrow}}
  \begin{algorithmic}[1]
   \STATE Solve the linear program ${LP}_{\fbass}$

   \STATE Round the solution to obtain a (non-contiguous) coloring  $c$.\label{alg:round_LP_BP_solution}
   \STATE Find a circular contiguous coloring $c'$ of the same total
   profit as $c$. Such a coloring exists since the input graph is
   proper, and is obtained by scanning the intervals from left to right and
   assigning the colors in a fixed circular order.
\label{alg:circular_color}
   \STATE Let $S'$ be the set of colored intervals in $c'$.
   \STATE Let $S'_w \subseteq S'$ be the subset of intervals $J_i \in S'$ for which $|c'(J_i)| \geq \eps W$,
               and let $S'_n = S' \setminus S'_w$.
\STATE for any $J_i \in S'_w$, round down $|c'(J_i)|$ to the nearest integral multiple of $\eps^2 W$, and
eliminate the corresponding amount of colors in $c'(J_i)$, such the first color assigned to $J_i$ is
$f_i = \lfloor r \cdot \eps^2W \rfloor$, for some integer $r \geq 0$.
\label{alg:round_wide_color_assign}
 \FOR  {$\ell = \lfloor  r \cdot \eps^2 W \rfloor$, $r=0,1 , \ldots,  \lceil \frac{1}{\eps^2} \rceil $}
    \STATE Let $S'_w(\ell) = \{ J_i \in S'_w | \{\ell, \left[ \ell \bmod
        \lfloor (1-\eps)W \rfloor \right]+1 \} \subseteq  c'(J_i) \}$
    \STATE Let $P(S'_w(\ell))= 0$
    \FOR {$J_i  \in S'_w(\ell)$}
        \STATE Suppose that $c'(J_i) = \{ f_i, \left[ f_i \bmod \lfloor (1-\eps)W
         \rfloor \right] +1, \ldots , t_i \}$,\par
            \hskip\algorithmicindent for some $1 \leq f_i, t_i \leq \lfloor (1-\eps)W \rfloor$.
        \STATE Partition the set of $|c'(J_i)|$ colors assigned to
        $J_i$ into two contiguous blocks:\par
        \hskip\algorithmicindent ${Block}_1(i)=\{f_i, \ldots, \ell \}$, and ${Block}_2(i)=\{\ell \bmod \lfloor (1-\eps)W \rfloor +1, \ldots , t_i \}$.
        \STATE Let $P(S'_w(\ell)) += p_i \cdot \min\{|{Block}_1(i)|,|{Block}_2(i)|\}$
    \ENDFOR
\ENDFOR
\STATE Let $\ell_{\mathrm good} = {\mathrm {argmin}}_{1 \leq \ell \leq \lfloor (1-\eps)W \rfloor} P(S'_w(\ell))$.
\FOR {$J_i  \in S'_w(\ell_{\mathrm good})$} \label{alg:wide_intervals_in_lgood}
    \STATE Assign to $J_i$ the larger of ${Block}_1(i)$ and ${Block}_2(i)$. \label{alg:cut_wide_to_noncircular}
\ENDFOR
\STATE Renumber the first $ \lfloor (1-\eps)W \rfloor$ colors starting at
$\ell_{\mathrm good} \bmod \lfloor (1-\eps)W \rfloor +1$.
\STATE Let $S'_n(\ell_{\mathrm good}) = \{ J_i \in S'_n | \{\ell_{\mathrm good},
    \left[ \ell_{\mathrm good} \bmod \lfloor (1-\eps)W \rfloor \right] +1 \} \subseteq  c'(J_i) \}$
\FOR {$J_i \in S'_n(\ell_{\mathrm good})$} \label{alg:narrow_intervals_in_lgood}
    \STATE Assign to $J_i$ $|c'(J_i)|$ contiguous colors in the set
        $\{\lfloor (1-\eps)W \rfloor+1, \ldots ,W \}$. \label{alg:recolor_narrow_to_noncircular}
\ENDFOR
\STATE Return $c''$ the resulting coloring of $S'$.
\end{algorithmic}
\end{algorithm}

\section{A $\frac{2k}{2k-1}$-approximation Algorithm for Uniform {\fsap}}

We describe below Algorithm ${\ca}_{MaxSmall}$ for uniform {\fsap} instances.
It yields solutions that are close to the optimal
as $k=\lceil \frac{W}{\MaxW} \rceil$ gets large. Initially, ${\ca}_{MaxSmall}$ solves optimally {\fbap} on
 the input graph $G$. Let $G'$ be the support graph for this solution. ${\ca}_{MaxSmall}$ proceeds by generating
a  feasible solution for {\fsap} as follows. Consider the set of
intervals $J_i \in G'$ for which $|c(J_i)|=\MaxW$.
Denote the support subgraph of this set of intervals $G_2$.
Observe that the graph $G_2$ is $(k-1)$-colorable, since there is no
clique of size $k$ in $G_2$ (as it would require $k\MaxW > W$ colors).
Consequently, each interval in $G_2$ can be assigned a set of
$\MaxW$ contiguous colors, using a total of $(k-1)\MaxW$ colors. ${\ca}_{MaxSmall}$ then finds a
maximum independent set of intervals in the remaining subgraph $G_1 = G' \setminus G_2$, and
colors contiguously each interval in this set using the remaining $W \bmod \MaxW$ colors (see the pseudocode in
Algorithm \ref{alg:MaxSmall}).

\begin{algorithm}[!htb]
\caption{${\ca}_{MaxSmall}({\cJ}, \MaxW, W)$ \label{alg:MaxSmall}}
  \begin{algorithmic}[1]
 \STATE Find an optimal solution for {\fbap} on $G$, the interval graph
 of ${\cJ}$, using Algorithm Paging\_{\fba}.
\STATE Let $S \subseteq {\cJ}$ be the solution set of intervals, and $G' \subseteq G$ the support graph of $S$.
\STATE Let  $G_2 \subseteq G'$ the subgraph induced by the intervals
$J_i$ for which $|c(J_i)|=\MaxW$. \label{step:def_G2}
\STATE  Let  $G_1 \subseteq G'$ be the subgraph induced by the rest of
the intervals, i.e., intervals
$J_i$ for which $|c(J_i)|<\MaxW$. \label{step:def_G1}
\STATE Scan the intervals $J_i$ in $G_2$ from left to right and color contiguously
each interval with the lowest
available $\MaxW$ colors.\label{step:color_G2}
\STATE Let $r=W \bmod \MaxW$.
\STATE Let $I$ be a maximum independent set in $G_1$.
\STATE Color each interval in $I$ contiguously with $r$ colors.
\STATE Return the coloring of the intervals in $G_1 \cup G_2$.
   \end{algorithmic}
\end{algorithm}

\remove{
\myparagraph{Analysis}
We now analyze Algorithm ${\ca}_{MaxSmall}$.
}

\begin{theorem}
\label{thm:ratio_MaxSmall}
For any uniform instance of {\fsap}, ${\ca}_{MaxSmall}$ yields an
optimal solution for {\fsap}, if $k \in \{ 1,2 \}$, and a $\frac{2k}{2k-1}$-approximation for any $k \geq 3$.
\end{theorem}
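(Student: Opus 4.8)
The plan is to compare ${\ca}_{MaxSmall}$ against the optimum of the relaxed (non-contiguous) problem. First I would record the two basic quantities. Since every {\fsapss} solution is also an {\fbass} solution, $OPT_{\fsapss}({\cJ}) \le OPT_{\fbass}({\cJ})$, and by Theorem~\ref{thm:opt_fba} Algorithm Paging\_{\fba} computes an optimal {\fbass} solution, so $OPT_{\fbass}({\cJ}) = \MaxW\,|V(G_2)| + P_1$, where $P_1 = \sum_{J_i \in V(G_1)} |c(J_i)|$ is the profit of the partially-served intervals. On the algorithm's side I would verify feasibility and value: because $G_2$ contains no clique of size $k$ it is $(k-1)$-colourable, so the greedy scan colours all of $V(G_2)$ with $\MaxW$ contiguous colours inside $\{1,\dots,(k-1)\MaxW\}$, leaving the top $r=W\bmod\MaxW$ colours free for a maximum independent set of $G_1$; hence ${\ca}_{MaxSmall}$ is feasible with value exactly $\MaxW\,|V(G_2)| + r\,\alpha(G_1)$. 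It therefore suffices to bound $\frac{\MaxW|V(G_2)|+P_1}{\MaxW|V(G_2)|+r\,\alpha(G_1)}$, i.e. to control the loss $P_1 - r\,\alpha(G_1)$. We may assume $r>0$, since if $r=0$ Lemma~\ref{thm:uniform_fsap_kint} already yields an exact solution. For $k\in\{1,2\}$ I would argue optimality directly: then $(k-1)\MaxW + r = W < 2\MaxW$, so no two full ($\MaxW$-coloured) intervals overlap, $G_2$ is an independent set, and a short case analysis shows the loss is zero.

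The core is the case $k\ge 3$, where I would exploit two facts. The first is a structural property of the Belady-based solution produced by Paging\_{\fba}: at every \emph{saturated} time (a time at which all $W$ colours are used) exactly $k-1$ intervals receive the full $\MaxW$ colours, the remaining $r$ colours being split among partial intervals. Since an interval in $G_1$ that meets no saturated time could be given more colours, contradicting optimality of the {\fbass} solution, every $J_i\in V(G_1)$ meets a saturated time, and there it competes only for the leftover $r$ colours; consequently $|c(J_i)|\le r$ for all $J_i\in V(G_1)$. The second fact is geometric. I would cover $V(G_1)$ by $\alpha(G_1)$ cliques (an interval graph is perfect, so its clique-cover number equals $\alpha$), and in each clique, taken at its common (Helly) point $\tau$, I would charge the excess $\big(\sum_{J_i\ni\tau}|c(J_i)|\big)-r$ to the saturated ``blocks'' of $k-1$ full intervals that cap the clique's members.

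The key lemma is that in any such clique at most two members can be capped by \emph{private} flanking blocks — one to the left of $\tau$ and one to the right — because three pairwise-overlapping intervals all containing $\tau$, each reaching a saturated block disjoint from the others, is impossible on the line: an ordering/pigeonhole argument forces two of the blocks to the same side of $\tau$, and then the interval reaching the farther block would contain the nearer (already-saturated) block. This caps each clique's excess and ties it to the $2(k-1)\MaxW$ units of full-interval profit of its two flanking blocks, which are already counted inside $\MaxW|V(G_2)|$. Summing over the $\alpha(G_1)$ cliques and optimising the resulting inequality over the feasible range $r/\MaxW\in(0,1)$ gives $\frac{\MaxW|V(G_2)|+P_1}{\MaxW|V(G_2)|+r\,\alpha(G_1)}\le \frac{2k}{2k-1}$, with the supremum $r\to\MaxW$ attained by a single two-block gadget: two blocks of $k-1$ full intervals joined by two partial intervals that overlap between them, giving ratio $\frac{2(k-1)\MaxW+2r}{2(k-1)\MaxW+r}\to\frac{2k}{2k-1}$.

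The main obstacle is exactly this last step: establishing the saturated-time structural property of the Belady solution (so that each partial interval gets at most $r$ colours), and turning the ``at most two flanking blocks per clique'' geometry into a clean accounting that matches every unit of $G_1$-excess to at least $2(k-1)$ units of full-interval profit \emph{without double counting} across cliques. The $k\le 2$ reductions and the two value computations are routine; the factor $2k$ is precisely the arithmetic of the two flanking blocks (each of $k-1$ full intervals) against one surviving partial interval, and checking that the two-block gadget is tight pins down the constant.
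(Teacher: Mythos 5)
Your setup matches the paper's: compare against $OPT_{\fbass}$, note the algorithm's value is $\MaxW\,|G_2| + r\,\alpha(G_1)$, and control the loss. Your first structural fact is essentially the paper's Claim about tight points, but be aware you need it in the \emph{strong} form proved there: every saturated time contains exactly \emph{one} partially-served interval (hence exactly $k-1$ full ones, and that partial holds exactly $r$ colors). In the weaker form you state (``the remaining $r$ colours being split among partial intervals''), your key lemma's punchline collapses: if several partials may share the leftover $r$ colors at a saturated time, then ``the interval reaching the farther block would contain the nearer, already-saturated block'' yields no contradiction whatsoever. (Similarly, for $k\le 2$ the relevant independence is of $G_1$, not $G_2$; it too follows from the tight-point claim.)

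The genuine gap, however, is the one you flag yourself and never close: cross-clique double counting. This is not a bookkeeping nuisance; it is the crux of the theorem. Consider a chain in which blocks of $k-1$ identical full intervals alternate with $2$-cliques of partials, each clique's left member capped by the block on its left and its right member by the block on its right, so every interior block is \emph{shared} by two cliques. This configuration satisfies every local property in your outline (each partial has exactly $r$ colors, contains a saturated time at which it is the unique partial, and all cliques have size two), yet the charged ratio along the chain tends to $\frac{(k-1)\MaxW+2r}{(k-1)\MaxW+r}\rightarrow\frac{k+1}{k}$ as $r\rightarrow\MaxW$, which exceeds $\frac{2k}{2k-1}$ for every $k\ge 2$. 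So line geometry alone cannot finish the proof; what excludes shared blocks is \emph{global optimality} of the {\fbass} solution (reducing one interval of a shared block to $r$ colors and raising the two adjacent partials to $\MaxW$ gains $\MaxW-r>0$, so such a support is never optimal and Paging\_{\fba} never produces it). The paper sidesteps charging entirely: it proves $G_1$ is $2$-colorable, observes the whole support graph $G'$ has no $(k+1)$-clique and is therefore $k$-colorable, and then shows $r|G_1|\le\frac{\lambda}{k-1+\lambda}OPT_{\fbass}<\frac{1}{k}OPT_{\fbass}$ (where $\lambda=r/\MaxW$) by contradiction: if $|G_1|>|G'|/k$, recoloring $G'$ with $k-1$ classes at $\MaxW$ colors and one class at $r$ would strictly beat the optimum. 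That single global inequality, combined with $\alpha(G_1)\ge|G_1|/2$, gives the ratio $\frac{2k-2+2\lambda}{2k-2+\lambda}<\frac{2k}{2k-1}$. Your per-clique accounting cannot be completed without importing an argument of this kind, so as proposed the proof does not go through.
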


The proof of Theorem~\ref{thm:ratio_MaxSmall} uses the next lemma.
Recall that $G_1 \subseteq G'$ is the subgraph induced by the intervals
$J_i$ for which $|c(J_i)|<\MaxW$.
\remove{
\begin{lemma}
\label{lemma:k_colorable_support_graph}
There exists an optimal solution set $S$ for uniform  {\fbap} in which the support graph $G'=(S, E_S)$ is
$k$-colorable.
\end{lemma}
\begin{proof}
Consider the support graph $G'$ of the solution output by Paging\_{\fba} (described in
Section \ref{sec:non_con}) for a uniform instance of {\fbap}. Sort the intervals in the solution in non-decreasing
order by their right endpoint, and partition the set to cliques as follows. Let $J_1^1=(s_1^1, e_1^1)$
 denote the interval having
the leftmost right endpoint in $G'$. We denote by Clique 1 the subset of intervals in $G'$ intersecting the
right endpoint of $J_1^1$, $e_1^1$. Let $J_1^2$ be the interval having the leftmost right endpoint
among those that do not intersect $e_1^1$. We define Clique 2 as the subset of intervals in $G'$ that
intersect the right endpoint of $J_1^2$, $e_1^2$. Similarly, we define Clique $j$, for $j \geq 3$.
Consider Clique $j$, for $j \geq 1$.
\begin{myclaim}
\label{claim:Greedy_Max_assign}
For any interval $J_i$ in Clique $j$, except maybe for the interval with maximal right endpoint,
it holds that $|c(J_i)|=\MaxW$.
\end{myclaim}
\begin{proof}
Let $n_j$ be the number of intervals in Clique $j$. We number the intervals by their right endpoints.
Assume that, for some $1 \leq \ell < n_j$,  the interval $J_\ell^j$ was assigned less than $\MaxW$ colors.
It follows that any interval $J_{\ell'}^j$, for $\ell < \ell' \leq n_j$ has its left endpoint to the right of $s_\ell^j$.
In addition, at the time $s_\ell^j$ there are not enough available colors for $J_\ell^j$.
Thus, we have that none of the intervals $J_{\ell'}^j$ for $\ell < \ell' \leq n_j$ is assigned colors
by Paging\_{\fba}. Contradiction.
\hfill \eod
\end{proof}

By Claim \ref{claim:Greedy_Max_assign}, we have that for all $j \geq 1$, Clique $j$ contains at most $k$ intervals.
\hfill \eod
\end{proof}

Using the solution output by Paging\_{\fba}, we have the following.
}
\begin{mylemma}
\label{lemma:G1_proper_two_colorable}
The subgraph $G_1 \subseteq G'$ is proper and $2$-colorable.
\end{mylemma}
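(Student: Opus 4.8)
The plan is to work with the explicit solution produced by \mbox{Paging\_{\fba}} and to exploit the furthest--in--the--future rule that governs its color (re)assignments. The first step is to isolate the one structural property on which everything rests, which I will call \emph{saturation}: for every deficient interval $J_i\in G_1$ (i.e.\ $0<|c(J_i)|<\MaxW$) there is a time $t_i\in[s_i,e_i)$ at which the $W$ colors are entirely consumed by intervals whose right endpoint is at most $e_i$, while \emph{every} interval $J_\ell$ with $t_i\in[s_\ell,e_\ell)$ and $e_\ell>e_i$ receives no color at all. This is exactly the invariant maintained by the algorithm: when $J_i$ fails to reach $\MaxW$, the reassignment loop over furthest--ending intervals has already driven the colors of all intervals extending beyond $e_i$ to $0$ and $\AVAIL$ is empty at $t_i$; since colors are released only at right endpoints $\ge e_i$, this tight configuration persists. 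Equivalently, one first proves by a standard exchange argument that the final allocation coincides with the greedy rule that scans intervals in nondecreasing order of right endpoint and assigns each $\min\{\MaxW,\ \min_{t\in[s_i,e_i)}\mathrm{cap}(t)\}$ colors; then $t_i$ is simply a point where the residual capacity drops to $0$.

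Properness is then immediate. Suppose $J_a,J_b\in G_1$ with $J_b$ properly contained in $J_a$; since endpoints are distinct this means $s_a<s_b$ and $e_b<e_a$, so $J_b$ is handled before $J_a$. Applying saturation to $J_b$ yields $t_b\in[s_b,e_b)\subseteq[s_a,e_a)$. Because $J_a$ contains $t_b$ and $e_a>e_b$, the saturation clause forces $|c(J_a)|=0$, contradicting $J_a\in G_1$. Hence no member of $G_1$ is properly contained in another.

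For $2$--colorability it suffices, $G_1$ being an interval graph, to exclude three pairwise--intersecting members. Assume $A,B,C\in G_1$ pairwise overlap; using properness, order them so that $s_A<s_B<s_C$ and $e_A<e_B<e_C$, and observe $s_C<e_A$ since they share a common point. Saturation at $A$ produces $t_A$ with every strictly longer, colored interval avoiding $t_A$; as $B$ is longer and colored, $B\not\ni t_A$, so $t_A<s_B$. Saturation at $B$ produces $t_B$ with the same property, and since $C$ is longer and colored we get $t_B<s_C$. Consequently $t_B\in[s_B,s_C)\subseteq[s_A,e_A)$, that is, \emph{$B$'s saturation point lies inside $A$'s span}, where $A$ itself is present and deficient.

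The contradiction must now be extracted from this nested configuration, and I expect this to be the main obstacle. The target is a \emph{no--nested--deficiency} statement: a deficiency cannot be realized strictly inside the span of an already deficient interval. The intended route is a descending induction on right endpoints showing that, at $t_B$, the $W$ colors are held by $A$ (with $|c(A)|=c_A<\MaxW$), by $B$, and by intervals ending no later than $e_B$; lowering the residual capacity at $t_B$ below $\MaxW$ so as to cap $B$ there requires some auxiliary interval that is itself capacity--limited, and one shows that any such interval must saturate a point again contained in $A$ (or in $B$), which by the properness argument drives $A$ (or $B$) to zero colors. The delicate point is that the ``slack'' that keeps $C$ alive on $[s_C,e_A)$ cannot be localized away from these interior saturation points, so every attempt to carve out a second interior deficiency collapses one of the three allocations. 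Once this inductive step is in place, $t_B<s_C$ together with $s_C<e_A$ is untenable, no three members of $G_1$ share a point, and $G_1$ is $2$--colorable.
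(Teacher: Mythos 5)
Your properness argument is sound and matches the paper's: if $J_b\subsetneq J_a$ were both in $G_1$, the reassignment loop at $s_b$ would have either filled $J_b$ to $\MaxW$ colors or drained $J_a$ to zero, contradicting membership of both in $G_1$. Your ``saturation'' property at processing times is also a correct reading of Paging\_{\fba} (though your side remark that ``this tight configuration persists'' is false: colors held at $s_i$ by intervals ending before $e_i$ are later released and can migrate to intervals not containing $s_i$, so saturation is a statement about the moment $s_i$ only; relatedly, the paper's notion of a \emph{tight} point is with respect to the \emph{final} allocation, and the fact that every $G_1$ interval contains one rests on optimality of the algorithm, not on the processing-time snapshot).

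The genuine gap is the second half. You correctly reduce $2$-colorability to excluding the nested configuration $t_B\in[s_B,s_C)\subseteq[s_A,e_A)$, but you never derive a contradiction from it --- and indeed none follows from saturation alone: at $t_B$ all $W$ colors sit on intervals ending by $e_B$, and $A$ (ending at $e_A<e_B$) is perfectly entitled to hold some of them. What is missing is exactly the paper's Claim \ref{claim:tight_points_in_G1}: every tight point is contained in \emph{exactly one} interval of $G_1$. Its proof is not a soft exchange argument; it is a left-to-right induction over tight points that combines Belady's furthest-in-the-future rule with a counting step your sketch never touches. Namely, since $W$ is not a multiple of $\MaxW$ (otherwise $G_1=\emptyset$ and there is nothing to prove), every tight point contains at least one $G_1$ interval, and when it contains exactly one, that interval holds exactly $W\bmod\MaxW$ colors; consequently a tight point cannot contain exactly two $G_1$ intervals (the $G_2$ intervals crossing it contribute a multiple of $\MaxW$, so the second deficient interval's count would have to be $\equiv 0 \pmod{\MaxW}$), and the third $G_1$ interval thereby forced to cross the same point is ruled out by the Belady exchange. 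Your proposed ``descending induction on right endpoints'' names none of this --- in particular it makes no use of $W\bmod\MaxW\neq 0$, without which no purely combinatorial exchange can close the argument --- so the ``delicate point'' you flag is precisely the unproven heart of the lemma, not a routine step.
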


\begin{proof}
We first note that if $G_1$ is not proper, then there exist two intervals, $J_i$ and $J_j$, such that
$J_j$ is properly contained in $J_i$. By the way Paging\_{\fba} proceeds, when it colors $J_j$, some colors
that were assigned to $J_i$ should be assigned to $J_j$, until either $|c(J_j)|=\MaxW$, or
$|c(J_i)|=0$. Since none of the two occurs - a contradiction.

We now show that $G_1$ is $2$-colorable. Throughout the proof, we assume that
there are $n_1$ intervals in $G_1$ sorted by their
starting points, and numbered
$1, 2, \ldots,n_1$ ,
i.e., $s_1 < s_2, < \cdot \cdot \cdot < s_{n_1}$.
For any $t \in [s_1,e_{n_1})$, say that $t$ is {\em tight} if
$\sum_{ \{J_\ell \in {\cJ}: t \in J_\ell \}} |c(J_\ell)| =W$.
Let $\cT$ denote the set of tight time points.
To keep a discrete set of such points, we consider only tight points $t$ which are also the start-times of
intervals, i.e., $t=s_i$ for some $1 \leq i \leq n$.
We note that every interval $J_i \in G_1$ contains at least one tight time
point (otherwise, Paging\_{\fba} would assign more colors to $J_i$). We complete the proof using the next claim.

\begin{cl}
\label{claim:tight_points_in_G1}
Every time point  $t \in \cT$ is contained in exactly one interval  $J_i \in G_1$.
\end{cl}

We now show that Claim \ref{claim:tight_points_in_G1} implies that $G_1$ is $2$-colorable.
Assume that there exists in $G_1$ a clique of at least $3$ intervals.
Let  $J_{a}, J_{b}, J_{c}$ be three ordered intervals in this clique.
We note that there is no $t \in J_b$ that is not contained in either
$J_a$ or $J_c$. However, $J_b$ must contain a tight time point.
Contradiction to Claim
\ref{claim:tight_points_in_G1}.

\noindent
{\bf Proof of  Claim \ref{claim:tight_points_in_G1}:}
\remove{
We show the two properties in the claim.
\begin{enumerate}
\item[(i)]
We use induction on $i$ to show that, for any $t \in T_i$, $J_i$ is alone in $G_1$ at $t$,
for all $J_i \in G_1$.
\\
{\bf Basis:} $i=1$. Let $t \in T_1$, then we distinguish between two cases.
\begin{enumerate}
\item[(a)]
If $t < s_2$ then we are done.
\item[(b)]
Otherwise, Paging\_{\fba} assigns to $J_2$ colors only if $|c(J_1)|= \MaxW$. Contradiction.
\end{enumerate}
{\bf Induction Step:} Consider a point $t \in T_i$ for $i >1$.
We note that, by the induction hypothesis, if $t \in T_j$ for some $j <i$, then $J_j$ is alone in $G_1$
at $t$, namely, $J_i \notin G_1$. Thus, we have two cases to consider.
\begin{enumerate}
\item[(a)]
If $e_{i-1} < t < s_{i+1}$ then we are done.
\item[(b)]
If $t > s_{i+1}$, then Paging\_{\fba} assigns colors to $J_{i+1}$ only $|c(J_i)|= \MaxW$. Contradiction.
\end{enumerate}
\item[(ii)]
By $(i)$, for any $J_i \in G_1$, consider a point $t \in T_i$. Then, since all of the intervals
that intersect $J_i$ at $t$ belong to $G_2$ (i.e., each of these intervals is assigned $\MaxW$
colors, it follows that $|c(J_i)|= W \bmod \MaxW$. \hfill \eod
\end{enumerate}
}
First, note that since $W$ is not a multiple of $\MaxW$ every tight
time point has to be contained in at least one interval $J_i \in G_1$.
We prove that it cannot be contained in more than one such interval by induction.
Let $t_1$ be the earliest time point in $\cT$. Since each $J_i \in
G_1$ contains a tight time point, $t_1 \in J_1$.
If $t_1 < s_2$ then clearly the claim holds for $t_1$.
Suppose that $t_1 \ge s_2$. In this case, since there is at least one
available color
in $[s_1,s_2)$, and since $e_2 > e_1$, Algorithm Paging\_{\fba} would
assign at least one additional color to $J_1$ rather than to
$J_2$. A contradiction.

For the inductive step, let $i>1$, and consider $t_i \in \cT$. Assume that the claim holds for
$t_{i-1} \in \cT$, and let $t_{i-1}  \in J_{\ell}$.
Since $t_{i-1}$ is not contained in any other interval in $G_1$, and
since it is tight, we must have $|c(J_{\ell})|=W \bmod \MaxW$.
To obtain a contradiction assume that $t_i$ is contained in more than
one interval in $G_1$. At least one of these intervals must be
$J_{\ell+1}$ (since $t_i > e_{\ell-1}$).
If $t_i$ is not contained in $J_\ell$ then clearly, by our assumption, $t_i$
must also be contained in $J_{\ell+2}$. However, even if $t_i \in J_\ell$,
since $|c(J_{\ell})|=W \bmod \MaxW$, in order for $t_i$ to be tight, it
must be contained also in $J_{\ell+2}$.
In this case, since there is at least one
available color
in $[s_{\ell+1},s_{\ell+2})$, and since $e_{\ell+2} > e_{\ell+1}, $ Algorithm Paging\_{\fba} would
assign at least one additional color to $J_{\ell+1}$, rather than to
$J_{\ell+2}$. A contradiction.
\hfill \eod
\end{proof}

We are ready to show the performance ratio of Algorithm ${\ca}_{MaxSmall}$.

\noindent
{\bf Proof of Theorem \ref{thm:ratio_MaxSmall}:}
Given the graphs $G'$ and $G_1, G_2$, as defined in Steps \ref{step:def_G1}, and \ref{step:def_G2} of
${\ca}_{MaxSmall}$, respectively. Let ${OPT}_{\fsapss}(G)$ and ${\ca}(G)$ be the value of an optimal
solution and the solution output by ${\ca}_{MaxSmall}$, respectively, for an input graph $G$.
Clearly, ${OPT}_{\fsapss}(G) \le {OPT}_{\fbass}(G)$.

\par\noindent{Case 1:} $k<3$. Since in this case $G_1$ is an independent set,
${\ca}(G) =  {OPT}_{\fbass}(G)$, and thus ${\ca}(G)=  {OPT}_{\fsapss}(G)$.

\par\noindent{Case 2:}
$k\ge 3$. Since in this case $G_1$ is 2-colorable,
${\ca}(G) \ge  {OPT}_{\fbass}(G) - \frac 12 \cdot (W \bmod \MaxW)\cdot |G_1|$.
Hence, to get the approximation ratio $\frac{2k}{2k-1}$, we need to show that
$(W \bmod \MaxW)\cdot |G_1| \le \frac {1}{k} {OPT}_{\fbass}(G)$.
Let $\lambda\MaxW = W \bmod\MaxW$. Note that $0 < \lambda < 1$.
In fact, we prove a slightly better bound, as
we show that
\begin{equation}\label{eqn:bound_G1}
\frac{(W \bmod \MaxW)\cdot |G_1|}{{OPT}_{\fbass}(G)} \le
\frac{\lambda}{k-1+\lambda} < \frac{1}{k}.
\end{equation}
This implies the approximation ratio
$\frac{2k-2+2\lambda}{2k-2+\lambda} < \frac{2k}{2k-1}$.
To obtain a contradiction, assume that inequality~(\ref{eqn:bound_G1})
does not hold. Recall that ${OPT}_{\fbass}(G) = (W \bmod \MaxW)\cdot
|G_1|+\MaxW\cdot |G_2|$, and that $|G_2| = |G'|-|G_1|$. We get
\begin{align*}
    \frac{(W \bmod \MaxW)\cdot |G_1|}{{OPT}_{\fbass}(G)} & =
    \frac{\lambda\MaxW|G_1|}{\MaxW|G'|-\MaxW(1-\lambda)|G_1|} &=
    \frac{\lambda|G_1|}{|G'|-(1-\lambda)|G_1|} &>
\frac{\lambda}{k-1+\lambda}.
\end{align*}
This implies $k|G_1| > |G'|$, and thus
\begin{align*}
    {OPT}_{\fbass}(G) &= \lambda\MaxW|G_1|+\MaxW(|G'|-|G_1|) =
    \MaxW|G'| - (1-\lambda)\MaxW|G_1| \\ &<
(1-\frac{1}{k}(1-\lambda))\MaxW|G'|=\frac{k-1+\lambda}{k}\cdot\MaxW|G'|.
\end{align*}

Note than $G'$, the support graph of the solution obtained by
Paging\_{\fba}, is $k$-colorable, since it cannot contain any clique of
size $k+1$. Indeed, such a clique can have at most $2$ intervals from $G_1$,
and at least $k-1$ intervals from $G_2$, and thus requires more than $W$
colors (since Algorithm Paging\_{\fba} assigns $W \bmod\MaxW$ colors to each interval in $G_1$).
We can use the $k$ coloring to obtain a solution of the
{\fbap} instance, by assigning $\MaxW$ colors to intervals in the $k-1$ largest
color classes, and $W \bmod\MaxW$ to the remaining color class.
Thus,
${OPT}_{\fbass}(G) \ge
\frac{1}{k}\cdot\lambda\MaxW|G'|+\frac{k-1}{k}\cdot\MaxW|G'|=\frac{k-1+\lambda}{k}\cdot\MaxW|G'|$.
A contradiction.
\remove{
We distinguish between two cases.
\begin{enumerate}
\item[(i)]
Suppose that $k \geq 3$. Then, we have
\[
\begin{array}{ll}
\displaystyle{
\frac{{OPT}_{\fsapss} (G)}{{\ca}(G)}} & \displaystyle {\leq \frac{{OPT}_{\fsapss} (G_1) + {OPT}_{\fsapss} (G_2)}
{\frac{{OPT}_{\fsapss} (G_1)}{2} + {OPT}_{\fsapss} (G_2)} }\\
\\
& \displaystyle{\leq \frac{{OPT}_{\fsapss} (G')}{{OPT}_{\fsapss} (G') (\frac{1}{2k} + \frac{k-1}{k})} = \frac{2k}{2k-1}.}
\end{array}
\]
The first inequality follows from the fact that ${\ca}_{MaxSmall}$ colors $G_2$ optimally and finds a
maximum independent set in $G_1$, which is $2$-colorable. The second inequality holds since $G_2$ is
a maximum $(k-1)$-colorable subgraph, and therefore, ${OPT}_{\fsapss}(G_2) \geq \frac{k-1}{k} {OPT}_{\fsapss}(G)$.
\item[(ii)]
If $k=2$, we note that $G_1$ is an independent set, thus ${\ca}_{MaxSmall}$ yields an optimal solution.
\end{enumerate}
}
\hfill \eod

\comment{
Consider now uniform instances  in which $k \leq 1/\eps$. We now show that our scheme finds for
such instances an almost optimal solution in polynomial time.
\begin{lemma}
\label{lemma:polytime_MaxLarge}
For any uniform instance of {\fsap} for which $k \leq 1/\eps$, a solution of value at least
$(1-\eps) OPT_{\fsapss}({\cJ})$ can be found in polynomial time.
\end{lemma}
\begin{proof}
It suffices to show that we can find in polynomial time a solution of structure as given in Lemma
\ref{lemma:MaxLarge_structure}. We note that, since the number of strips is at most
$\frac{2W}{\eps \MaxW} \leq 2/\eps^2$, which is a constant, in any feasible solution, the number of
active intervals at any point of time is fixed. Hence, we can use dynamic programming to find a solution of
maximum value among those having the {\em strip} structure.
\hfill \eod
\end{proof}
}

\comment{
\section{The Linear Program \ensuremath{{LP}_{\textsc{FBA}}}}

In Figure \ref{fig:LP_fba} we give the linear program used by
Algorithm ${\acn}$. 
For each $J_i \in {\cJ}_{narrow}$ the linear program has a variable
$x_i$ that denotes the number of colors assigned to $J_i$.
For each $J_i \in {\cJ}_{wide}$ the linear program has two variables
$x_i$ and $y_i$ that denote the number of colors assigned to $J_i$,
where $y_i$ denotes the assigned colors ``over" the first
$\lceil \eps W \rceil$ colors.

Constraint (\ref{const:bound_wide_profit}) bounds the total profit from the extra allocation for each
interval $J_i\in {\cJ}_{wide}$.
The constraints (\ref{const:color_bound}) bound the total number of
colors used at any time $t>0$ by $(1-\eps)W$.
We note that the number of constraints in (\ref{const:color_bound}) is polynomial in $|{\cJ}|$,
since we only need to consider points of time $t$, such that $t=s_i$ or $t=e_i$ for some
interval $J_i \in {\cJ}$, i.e., we have at most $2n$ constraints.
\label{app:LP_fba}
\begin{figure}[h]
\begin{center} \fbox{
\begin{minipage}{0.95\textwidth}
\begin{eqnarray}
({LP}_{\fbass}):~~ & \mbox{max} &
\displaystyle{\sum_{J_i \in {\cJ}_{narrow}} p_i x_i + \sum_{J_i \in {\cJ}_{wide}} p_i (x_i + y_i) }
\nonumber
\\
& \mbox{~~~s.t.~~}  & \displaystyle{\sum_{J_i \in {\cJ}_{wide}}
p_i y_i \leq \beta (1-\eps) OPT_{\fsapss}({\cJ})}
\label{const:bound_wide_profit}
\\
& {~~} &
\displaystyle{\sum_{\{ J_i \in {\cJ}: t \in J_i \}} x_i +
              \sum_{\{ J_i \in {\cJ}_{wide}: t \in J_i \}} y_i \leq (1-\eps) W}
\hbox{ ~ } \forall  t > 0
\label{const:color_bound}
\\
& {~~} &
0 \le x_i \le \min\{ r_{max}(i), \lceil \eps W \rceil \} \hbox{~~~~~~~~~~~~~~for } ~ 1\leq i\leq n
\nonumber
\\
& {~~} &
0 \le y_i \le r_{max}(i)- \lceil \eps W \rceil \hbox{~~~~~~~~~~~~~~~~~~for } ~ J_i \in {\cJ}_{wide}
\nonumber
\end{eqnarray}
\end{minipage}
}
\end{center}
\caption{The linear program $LP_{\fbass}$}.\label{fig:LP_fba}
\end{figure}
}

\end{document}